\PassOptionsToPackage{dvipsnames}{xcolor}
\documentclass[sigconf,nonacm]{aamas}
\usepackage{balance}
\usepackage{microtype,amsthm,mathtools,nicefrac,xspace}
\usepackage{xcolor}

\usepackage{tikz,booktabs,array,graphicx,placeins}
\usetikzlibrary{arrows.meta,positioning}
\usepackage{xurl}
\usepackage{algorithm}
\usepackage{algpseudocode}
\hypersetup{colorlinks=true,linkcolor=MidnightBlue,citecolor=MidnightBlue,urlcolor=MidnightBlue,
pdftitle={Parallel AI Agent Collaboration as Graph-Structured Distributed Block Optimization},
pdfauthor={Hanmin Li, Yury Demidovich, Peter Richt\'arik},pdfsubject={},pdfkeywords={multi-agent systems, agentic AI, distributed block optimization, fractional graph coloring}}

\newcommand{\eqdef}{\coloneqq}
\newcommand{\Exp}[1]{\mathbb E\left[#1\right]}
\newcommand{\ExpCond}[2]{\mathbb E\left[#1\,\middle|\,#2\right]}

\newcommand{\cX}{\mathcal X}
\newcommand{\cH}{\mathcal H}
\newcommand{\cM}{\mathcal M}

\newcommand{\cI}{\mathcal I}
\DeclareMathOperator*{\argmax}{arg\,max}
\theoremstyle{definition}
\newtheorem{assumption}{Assumption}
\theoremstyle{plain}
\newtheorem{theorem}{Theorem}

\newtheorem{proposition}[theorem]{Proposition}
\newtheorem{corollary}[theorem]{Corollary}

\makeatletter
\gdef\@copyrightpermission{
  \begin{minipage}{0.2\columnwidth}
   \href{https://creativecommons.org/licenses/by/4.0/}{\includegraphics[width=0.90\textwidth]{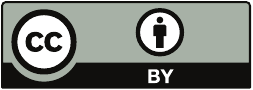}}
  \end{minipage}\hfill
  \begin{minipage}{0.8\columnwidth}
   \href{https://creativecommons.org/licenses/by/4.0/}{This work is licensed under a Creative Commons Attribution International 4.0 License.}
  \end{minipage}
  \vspace{5pt}
}
\makeatother
\setcopyright{none}
\acmConference[Preprint]{arXiv working draft}{October 2026}{}{}
\copyrightyear{2026}
\acmYear{2026}
\acmDOI{}
\acmISBN{}
\acmSubmissionID{}
\submissionType{}

\title[Parallel AI Agent Collaboration]{Parallel AI Agent Collaboration as Graph-Structured Distributed Block Optimization}
\makeatletter
\patchcmd{\@typeset@author@bx}
  {\@authorfont\@currentauthors\par}
  {\@authorfont\begingroup
   \def\par{\ifhmode\qquad\fi}\@currentauthors\endgroup\par}
  {}{\PackageError{author-layout}{Unable to set horizontal author layout}{}}
\makeatother
\author{Hanmin Li}
\author{Yury Demidovich}
\author{Peter Richt\'arik}
\affiliation{
  \institution{King Abdullah University of Science and Technology}
  \city{Thuwal}
  \country{Saudi Arabia}}

\begin{abstract}
Parallel AI agents can each produce useful local edits yet degrade a shared solution when their edits are combined. We model this coordination problem as discrete block optimization with an exactly evaluable objective and a known dependency graph. The graph certifies compatible sets of edits through its independent sets, and evaluating proposals against the same current solution yields verified gains that add exactly within any such set. A coordinator can therefore construct a safe compatible candidate while additionally evaluating interacting edits to capture favorable interactions. We prove that, using an optimal fractional coloring, the safe merger preserves a $1/\chi_f(G)$ fraction of the total verified local improvement, where $\chi_f(G)$ is the fractional chromatic number of the dependency graph. We further show that no merger restricted to selecting a subset of the same proposals can guarantee a larger fraction in general. Under conditional local-improvement guarantees and a block error-bound condition, repeated collaboration converges geometrically in the objective gap, separating the effects of task structure, agent allocation, and computation cost. The resulting rate gives a linear program for agent assignment and an explicit time--cost trade-off. We complement the analysis with experiments on solution quality and the cost of parallel coordination. Code-repair experiments show that verified merging resolves conflicts between individually useful edits that cause naive merging to fail. On larger conflict-free tasks, parallel agents reduce modeled completion time at higher input-token-equivalent cost. On three public road networks, fractional integration retains 922 units of verified graph-cut improvement across 90 paired runs. For route-search agents sharing exclusive access to vehicle routes, fractional scheduling reduces measured wall time by a median of 15\% across twelve paired runs, with route-cost changes varying across runs.
\end{abstract}

\keywords{multi-agent systems, generative AI agents, coordination, combinatorial optimization, resource and task allocation, distributed block optimization, fractional graph coloring}
\ccsdesc[500]{Computing methodologies~Multi-agent systems}
\ccsdesc[300]{Theory of computation~Distributed algorithms}
\ccsdesc[300]{Mathematics of computing~Graph algorithms}

\begin{document}
\begin{teaserfigure}
\centering
\begingroup
\definecolor{teaserInk}{HTML}{23354D}
\definecolor{teaserBlue}{HTML}{3975BA}
\definecolor{teaserTeal}{HTML}{087F82}
\definecolor{teaserCoral}{HTML}{C4674E}
\definecolor{teaserMute}{HTML}{637184}
\begin{tikzpicture}[x=1cm,y=1cm,
  every node/.style={font=\sffamily\fontsize{8}{9}\selectfont,text=teaserInk},
  heading/.style={font=\sffamily\bfseries\fontsize{9}{10}\selectfont,anchor=west},
  card/.style={rounded corners=3pt,draw=teaserBlue!28,fill=white,
    minimum width=2.43cm,minimum height=.46cm,text width=2.0cm,align=left},
  vertex/.style={circle,draw=teaserBlue,line width=.8pt,fill=white,
    minimum size=.53cm,inner sep=0pt,font=\sffamily\bfseries\small},
  selected/.style={vertex,draw=teaserTeal,fill=teaserTeal!12,text=teaserTeal},
  flow/.style={-{Stealth[length=1.7mm]},draw=teaserBlue!65,line width=.7pt}]
\path[use as bounding box] (0,0) rectangle (17.6,3.15);

\fill[teaserBlue!5,rounded corners=5pt] (0,.08) rectangle (5.4,3.12);
\fill[teaserBlue!5,rounded corners=5pt] (6.0,.08) rectangle (11.0,3.12);
\fill[teaserTeal!5,rounded corners=5pt] (11.6,.08) rectangle (17.6,3.12);
\node[heading] at (.22,2.83) {1\quad Propose in parallel};
\node[heading] at (6.22,2.83) {2\quad Identify compatible edits};
\node[heading] at (11.82,2.83) {3\quad Compare and accept};

\node[draw=teaserBlue!45,rounded corners=3pt,fill=white,
      minimum width=1.35cm,minimum height=1.0cm,align=center] (base) at (.94,1.52)
      {Shared\\artifact\\$F=3$};
\node[card] (p) at (3.76,2.20) {Pricing\hfill $+2$};
\node[card] (c) at (3.76,1.52) {Checkout\hfill $+2$};
\node[card] (r) at (3.76,.84) {Receipt\hfill $+1$};
\foreach \target in {p,c,r}{\draw[flow] (base.east) -- (2.04,1.52) |- (\target.west);}
\node[text=teaserMute,font=\sffamily\fontsize{7}{8}\selectfont] at (2.7,.32)
      {Each edit improves the same starting artifact};

\fill[teaserTeal!10,rounded corners=5pt] (6.48,.78) rectangle (8.40,2.43);
\node[selected] (gp) at (7.44,2.05) {P};
\node[selected] (gr) at (7.44,1.40) {R};
\node[vertex,draw=teaserCoral,fill=teaserCoral!7] (gc) at (10.05,2.05) {C};
\draw[teaserCoral,line width=1.15pt] (gp.east) --
      node[above=3pt,font=\sffamily\fontsize{7}{8}\selectfont,text=teaserCoral] {shared factor} (gc.west);
\node[text=teaserTeal,font=\sffamily\bfseries\fontsize{7}{8}\selectfont] at (7.44,.98) {Compatible pair};
\node[text=teaserMute,font=\sffamily\fontsize{7}{8}\selectfont] at (10.05,1.57) {Checkout};
\node[text=teaserMute,font=\sffamily\fontsize{7}{8}\selectfont] at (8.5,.32)
      {No shared factor $\Rightarrow$ additive gains};

\node[anchor=west,font=\sffamily\fontsize{7}{8}\selectfont,text=teaserMute] at (11.94,2.32) {Candidate};
\node[anchor=east,font=\sffamily\fontsize{7}{8}\selectfont,text=teaserMute] at (17.28,2.32) {Verified gain};
\fill[teaserTeal!13,rounded corners=3pt] (11.84,1.57) rectangle (17.36,2.10);
\node[anchor=west,text=teaserTeal,font=\sffamily\bfseries\fontsize{8}{9}\selectfont] at (12.0,1.83) {Compatible pair\quad P + R};
\node[anchor=east,text=teaserTeal,font=\sffamily\bfseries\fontsize{9}{10}\selectfont] at (17.2,1.83) {$+3$};
\node[anchor=west] at (12.0,1.22) {Full union\quad P + C + R};
\node[anchor=east,text=teaserCoral] at (17.2,1.22) {$+1$};
\draw[teaserTeal!25] (11.95,.88) -- (17.25,.88);
\draw[teaserTeal,line width=1.2pt] (12.04,.52) -- (12.15,.40) -- (12.38,.67);
\node[anchor=west,text=teaserTeal,font=\sffamily\bfseries\fontsize{8}{9}\selectfont] at (12.54,.52)
      {Accept the pair\quad $F=0$};

\draw[flow] (5.48,1.54) -- (5.9,1.54);
\draw[flow] (11.08,1.54) -- (11.5,1.54);
\end{tikzpicture}
\endgroup

\caption{Verified integration of parallel edits. In the running example, three individually improving edits give a gain of only one when merged together. The compatible pricing--receipt pair gains three and reaches the optimum.}
\Description{Three agents propose edits to pricing, checkout, and receipt, with individual gains two, two, and one. The dependency graph connects pricing and checkout. The compatible pricing and receipt pair gains three, compared with one for the full union, and is selected.}
\label{fig:teaser}
\end{teaserfigure}
\maketitle
\hypersetup{pdfauthor={Hanmin Li, Yury Demidovich, Peter Richt\'arik}}
\pdfstringdef\arxivpdfauthors{Hanmin Li, Yury Demidovich, Peter Richt\'arik}
\pdfinfo{/Author(\arxivpdfauthors)}

\section{Introduction}

Many tasks handled by AI agents can be decomposed into components that are worked on in parallel.
This can reduce latency and distribute work across components, but it also creates a coordination problem.
An edit that improves one component in isolation may interact with edits to other components once they are combined.
Program repair provides a concrete example: agents may modify different modules in parallel, yet their changes can still interfere through shared functionality or interfaces.
The central question is therefore not only how to generate useful local edits, but also how to decide which of them can be safely integrated into the shared solution.

For example, suppose a program must apply a discount exactly once, and either the pricing module or the checkout module can implement the required transformation.
Starting from a version that does not apply the discount, an edit to either module improves the program on its own.
If both edits are applied, however, the discount is applied twice.
Thus, individually useful edits to different modules need not be compatible.

The example exposes a broader coordination problem in parallel agent work.
A proposed edit can be evaluated in isolation against the current artifact, yet its value may change when other edits are applied alongside it.
We therefore separate proposal generation from proposal integration.
Agents work concurrently from the same artifact on distinct blocks, while a coordinator uses an evaluator and dependency information to decide which returned edits to combine.
Our goal is to determine how much of the verified local improvement can be guaranteed to be retained after integration.

We call this problem \emph{verified proposal integration}.
The coordinator observes black-box block proposals, their verified singleton gains against a common base artifact, and a dependency certificate identifying subsets whose gains add exactly.
It need not know how the proposals were generated or model their joint effects beyond this certificate.
We ask how much of the verified local improvement a coordinator can always preserve using only this information.
Our main result is a tight worst-case bound for integrating fixed proposals. We extend the analysis to convergence and agent allocation under additional assumptions. Experiments examine the integration mechanism and its cost.

\paragraph{Contributions.}
\begin{itemize}
\item We formulate \emph{verified proposal integration} for black-box agent edits and characterize its exact worst-case retention factor. 
Given verified singleton gains and a dependency certificate, a coordinator can always preserve a $1/\chi_f(G)$ fraction of the total verified local improvement. 
This guarantee is tight: in the worst case, no coordinator that forms its update by selecting a subset of the submitted proposals can guarantee a larger fraction (Sections~\ref{sec:model}--\ref{sec:merge}).

\item Under explicit assumptions on agent capability, assignment coverage, and task structure, we establish geometric convergence of repeated verified integration.
The convergence rate separates the effects of these three factors and yields a principled formulation for agent assignment and explicit time--cost trade-offs (Sections~\ref{sec:theory}--\ref{sec:allocation}).

\item We test proposal integration and its costs with language-model agents and use controlled examples to clarify the guarantees.
The experiments examine conflict resolution among individually useful edits, performance on held-out multi-round tasks, and the time--cost trade-off of parallel execution. A road-network diagnostic evaluates exact-factor merging on naturally occurring block interactions and compares fractional integration with fixed and adaptive ordinary coloring (Section~\ref{sec:experiments}).
\end{itemize}

\paragraph{Related work.}
Our formulation builds on ideas from parallel and distributed optimization, where multiple components are updated concurrently while accounting for their interactions.
In particular, it is closely related to parallel block coordinate descent and arbitrary sampling, which study how subsets of coordinates can be updated simultaneously \citep{richtarik2016,qu2016}.
Within this line of work, \citet{bradley2011parallel} analyze interference among parallel coordinate updates using a spectral measure.
A complementary perspective appears in GenCD, which separates coordinate descent into selection, proposal, acceptance, and update stages and includes a coloring-based variant \citep{scherrer2012scalingcoordinatedescentalgorithms}.
Our setting differs in two important ways.
First, proposals may be arbitrary discrete edits rather than gradient-based coordinate updates.
Second, the coordinator observes verified singleton gains and a dependency certificate without requiring a smooth model of their joint effects.

Distributed constraint optimization (DCOP) coordinates agents that control different variables to optimize a shared objective expressed through local functions \citep{modi2005adopt,fioretto2018distributed,troullinos2025conditional}.
This is closely related in spirit to our use of localized dependencies, but the coordination problem is different.
In DCOP, agents coordinate the choice of variable assignments themselves.
In our setting, agents have already generated candidate edits, and the coordinator decides which submitted proposals to combine using their verified singleton gains and a dependency certificate.
Our focus is therefore on verified integration of independently generated proposals rather than distributed search over the solution space.

Language-model agents have increasingly been applied to software development and repair.
MetaGPT and ChatDev organize software development through structured roles and multi-agent collaboration \citep{hong2024,qian2024}.
MAGIS and MASAI decompose repository-level software tasks across specialized agents \citep{NEURIPS2024_5d1f0213,arora2024masai}.
RepairAgent instead focuses on autonomous program repair through tool use and iterative validation \citep{bouzenia2025repairagent}, while a recent survey reviews the broader use of multi-agent systems across the software-development lifecycle \citep{he2025llm}.
These works focus primarily on how software agents generate, coordinate, and refine solutions.
We instead isolate the integration step, where independently generated edits must be reconciled into a single shared artifact.

A separate line of work studies how collaboration structures among agents should be designed or adapted.
GPTSwarm represents agent systems as optimizable computational graphs \citep{zhuge2024}, while DyLAN dynamically selects agents and communication structures \citep{liu2024}.
Hyperagent uses hypergraphs for the topology of multi-agent communication \citep{zhang2026hyperagent}.
These methods focus on structuring how agents interact during problem solving.
By contrast, our dependency graph represents interactions among components of the shared artifact rather than communication links among agents.
It identifies subsets of proposed edits whose verified gains add exactly and thereby supports their certified integration.

\section{Problem Formulation}\label{sec:model}
\paragraph{Blocks and objective.}
We represent the current shared artifact as $x=(x_b)_{b=1}^m$, where each block $x_b$ corresponds to one component of the solution, such as a program module.
For $b\in[m]\eqdef\{1,\ldots,m\}$, let $\cX_b$ be the nonempty finite set of admissible values for block $b$.
The resulting search space is $\cX=\prod_{b=1}^m\cX_b$.

We evaluate an artifact using a known loss function $F:\cX\to\mathbb R$, where smaller values are better.
For example, in program repair, $F$ may count failed checks or combine penalties derived from an explicit evaluation rubric.
For $y\in\cX_b$, we write $(x_{-b},y)$ for the artifact obtained by replacing $x_b$ with $y$ while leaving all other blocks unchanged.
We define
\begin{equation}\label{eq:task}
\begin{gathered}
F^\star\eqdef\min_{x\in\cX}F(x),\quad
\Delta(x)\eqdef F(x)-F^\star,\\
D_b(x)\eqdef F(x)-\min_{y\in\cX_b}F(x_{-b},y).
\end{gathered}
\end{equation}
Here $\Delta(x)$ is the global objective gap, while $D_b(x)$ is the largest improvement obtainable by modifying block $b$ alone.

\paragraph{Dependency structure.}
The loss need not separate across blocks, since edits to different blocks may affect the same part of the objective.
We capture these interactions through a factorization with known block dependencies.

\begin{assumption}[Dependency Structure]\label{ass:structure}
The objective admits a decomposition
\begin{equation}\label{eq:factors}
F(x)=\sum_{a=1}^{M}\ell_a(x_{N(a)}),
\end{equation}
where $M$ is the number of objective factors and each factor $\ell_a$ depends only on the blocks indexed by the known set $N(a)\subseteq[m]$.
\end{assumption}

This factorization induces a dependency graph $G$ whose vertices correspond to the blocks.
Two blocks $b$ and $c$ are connected by an edge whenever some factor depends on both.
An \emph{independent set} in $G$ therefore contains blocks that never appear together in the same objective factor.
Edits to blocks in an independent set are compatible in the sense that no factor is affected by more than one of them.

We refer to the known factor scopes, or any graph that conservatively encodes them, as a \emph{dependency certificate}.
Such a certificate ensures that blocks in the same independent set do not affect a common objective factor.
In program repair, it may be obtained from declared test--module scopes, interface dependencies, or conservative static analysis.
What matters is shared dependence in the objective rather than physical overlap between the edited components.
Adding extra edges to $G$ preserves the validity of the certificate but may weaken the resulting guarantee.
Omitting a true dependency, however, may incorrectly classify interacting edits as compatible.

For the pricing and checkout example from the introduction, let $x_1,x_2\in\{0,1\}$ indicate whether the pricing and checkout modules apply the discount, and let $x_3\in\{0,1\}$ indicate whether an independent receipt defect is repaired.
Consider
\begin{equation}\label{eq:running}
F(x)=2(1-x_1-x_2)^2+(1-x_3).
\end{equation}
The first term depends jointly on blocks $1$ and $2$, while the second depends only on block $3$.
The resulting dependency graph therefore contains only the edge $1$--$2$.
One valid coloring has color classes $\{1,3\}$ and $\{2\}$, as shown in Figure~\ref{fig:collaboration}.

\begin{figure}[t]
\centering
\begin{tikzpicture}[>=Stealth,every node/.style={font=\small},
agent/.style={rounded corners=2pt,draw=MidnightBlue,fill=MidnightBlue!5,minimum width=1.7cm,minimum height=.55cm},
block/.style={rounded corners=2pt,minimum width=1.7cm,minimum height=.55cm,draw=black!60}]
\node[agent] (a1) at (-2.6,1.5) {Agent 1};
\node[agent] (a2) at (0,1.5) {Agent 2};
\node[agent] (a3) at (2.6,1.5) {Agent 3};
\node[block,fill=TealBlue!15] (b1) at (-2.6,0) {$x_1$: pricing};
\node[block,fill=BurntOrange!15] (b2) at (0,0) {$x_2$: checkout};
\node[block,fill=TealBlue!15] (b3) at (2.6,0) {$x_3$: receipt};
\draw[->,MidnightBlue,thick] (a1)--(b1);
\draw[->,MidnightBlue,thick] (a2)--(b2);
\draw[->,MidnightBlue,thick] (a3)--(b3);
\draw[black!70,thick] (b1)--(b2);

\node[rounded corners=2pt,draw=TealBlue,fill=TealBlue!6,minimum width=5.7cm,minimum height=.6cm] (merge) at (0,-1.25) {Verify gains; compare candidates};
\draw[->,TealBlue,thick] (b1.south) to[bend right=12] (merge.west);
\draw[->,TealBlue,thick] (b3.south) to[bend left=12] (merge.east);
\draw[->,BurntOrange,dashed,thick] (b2.south)--(merge.north);
\end{tikzpicture}
\caption{Dependency graph induced by the objective in \eqref{eq:running}.
The discount term depends jointly on pricing and checkout, while the receipt term depends only on the receipt block.
Thus pricing and the receipt can share a color, whereas checkout must receive a different color.}
\Description{Three agents propose edits to pricing, checkout, and receipt blocks. Pricing and checkout are adjacent in the dependency graph, while the receipt is independent of both.}
\label{fig:collaboration}
\end{figure}

\paragraph{Assignments and verified proposals.}
Suppose $n\geq1$ agents are available, indexed by $[n]\eqdef\{1,\ldots,n\}$.
Given a current artifact $x$, an assignment $\cM\subseteq[n]\times[m]$ matches each agent to at most one block and each block to at most one agent.
Each assigned agent receives $x$ together with its designated block, works independently for a time budget $h>0$, and returns a proposed replacement for that block.

Before combining any proposals, the evaluator tests each one separately against the same base artifact $x$, leaving all other blocks unchanged.
A proposal is retained only if it is valid, returned within the time budget, and reduces $F$.
Otherwise, the original block value is retained.
For each assigned pair $(i,b)\in\cM$, let $Y_{ib}$ denote the resulting value of block $b$ after this filtering step.
Its verified singleton gain is
\begin{equation}\label{eq:gain}
g_{ib}\eqdef F(x)-F(x_{-b},Y_{ib})\geq0.
\end{equation}
We call this separate evaluation of each proposal against the common base artifact \emph{singleton verification}.

\section{Verified Proposal Integration}\label{sec:merge}

Singleton verification ensures that every retained proposal improves the common base when applied alone, but it does not determine how those proposals should be combined.

In the running example~\eqref{eq:running}, start from $(0,0,0)$ with loss $3$ and let each agent propose setting its block to one.
The three verified singleton gains are $2$, $2$, and $1$.
Applying all three proposals gives $(1,1,1)$ with loss $2$, while applying only blocks $1$ and $3$ gives loss $0$.
Thus, even accepting the full union whenever it improves $F$ can miss a substantially better compatible subset.

\paragraph{Certified integration.}
Let $\cI(G)$ denote the collection of independent sets of $G$.
A \emph{fractional coloring} $\lambda$ assigns a nonnegative weight $\lambda_I$ to each independent set $I\in\cI(G)$ so that every block is covered by independent sets with total weight at least one.
This fractional-cover view connects to classical work on fractional coloring and covers \citep{lovasz1975ratio,alon1997choosability}; distributed algorithms for fractional coloring are studied by \citet{bousquet2021distributed}.
The total weight of $\lambda$ is $\Lambda(\lambda)$, and the fractional chromatic number $\chi_f(G)$ is the minimum such weight:
\begin{equation}\label{eq:fractional-coloring}
\begin{aligned}
\Lambda(\lambda)
&\eqdef \sum_{I\in\cI(G)}\lambda_I,\\
\chi_f(G)
&\eqdef \min_{\lambda}\Lambda(\lambda)\\
\text{s.t.}\qquad
\lambda_I&\geq0,\qquad I\in\cI(G),\\
\sum_{\substack{I\in\cI(G)\\ b\in I}}\lambda_I
&\geq1,\qquad b\in[m].
\end{aligned}
\end{equation}

To state the integration rule over repeated rounds, we index collaboration rounds by $k$.
At the beginning of round $k$, let $x^k$ denote the common base artifact, let $\cM_k$ denote the assignment, and let $g_{ib}^k$ denote the verified singleton gain of proposal $(i,b)\in\cM_k$ as defined above.

Fix any feasible fractional coloring $\lambda$ and write $\Lambda=\Lambda(\lambda)$.
Only independent sets with $\lambda_I>0$ need to be considered in the merge rule.
For each such set $I$, let $x^{k,I}$ denote the candidate obtained from $x^k$ by applying the retained proposals whose blocks belong to $I$.
Because $I$ is independent, these proposals are compatible and their verified singleton gains add exactly.
We therefore score $I$ by
\begin{equation}\label{eq:fractional-merge-rule}
W_I^k\eqdef
\sum_{\substack{(i,b)\in\cM_k\\ b\in I}}g_{ib}^k.
\end{equation}
The coordinator then selects a highest-scoring independent set among those with $\lambda_I>0$ and applies its proposals:
\begin{equation}
I_k\in\argmax_{I:\lambda_I>0}W_I^k,
\qquad
x^{k,\mathrm{safe}}\eqdef x^{k,I_k}.
\end{equation}
Ties in the choice of $I_k$ are broken by a fixed rule.
The following theorem shows that this rule guarantees a fixed fraction of the total verified singleton gain for every realized collection of agent proposals, regardless of how those proposals were generated.

\begin{theorem}[Certified gain retention]\label{thm:fractional-merge}
Let $\lambda$ be any feasible fractional coloring with total weight $\Lambda=\Lambda(\lambda)$.
Under Assumption~\ref{ass:structure}, for every realization of the agent proposals,
\begin{equation}\label{eq:fractional-merge-bound}
F(x^k)-F(x^{k,\mathrm{safe}})
\geq
\frac{1}{\Lambda}
\sum_{(i,b)\in\cM_k}g_{ib}^k
\geq0.
\end{equation}
In particular, an optimal fractional coloring gives $\Lambda=\chi_f(G)$ and therefore retains at least a $1/\chi_f(G)$ fraction of the total verified singleton gain.
\end{theorem}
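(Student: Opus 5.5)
The argument has two steps: an exact additivity identity for independent sets, followed by an averaging argument over the fractional coloring.

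\textbf{Step 1 (exact additivity).} Fix an independent set $I$ with $\lambda_I>0$. Let $x^{k,I}$ differ from $x^k$ only on the blocks $b\in I$ that carry a retained proposal; on those blocks its value is $Y_{ib}$. By Assumption~\ref{ass:structure},
\[
F(x^k)-F(x^{k,I})=\sum_{a=1}^M\bigl(\ell_a(x^k_{N(a)})-\ell_a(x^{k,I}_{N(a)})\bigr).
\]
Since $I$ is independent in $G$, each scope $N(a)$ contains at most one block of $I$. A factor whose scope contains no block of $I$ contributes zero. A factor whose scope contains exactly one block $b\in I$ sees only that block changed, so its contribution equals the contribution it makes to the singleton change $F(x^k)-F(x^k_{-b},Y_{ib})$. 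Grouping the factors by the block of $I$ they touch then gives
\[
F(x^k)-F(x^{k,I})=\sum_{b\in I}g^k_{ib}=W^k_I.
\]
Here we use that blocks not assigned in $\cM_k$, or whose proposal was rejected, are unchanged and contribute $g=0$. Since $\cM_k$ gives each block at most one agent, the sum over $(i,b)\in\cM_k$ with $b\in I$ is well defined.

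\textbf{Step 2 (averaging).} Write $S=\sum_{(i,b)\in\cM_k}g^k_{ib}$. Because every $g^k_{ib}\ge 0$ and every block is covered with total weight at least one,
\[
\sum_{I:\lambda_I>0}\lambda_I W^k_I=\sum_{(i,b)\in\cM_k}g^k_{ib}\sum_{I\ni b}\lambda_I\ \ge\ S.
\]
The left side is a weighted sum with total weight $\Lambda$, so it is at most $\Lambda\max_{I:\lambda_I>0}W^k_I=\Lambda W^k_{I_k}$. Combining this with Step 1 gives
\[
F(x^k)-F(x^{k,\mathrm{safe}})=W^k_{I_k}\ge S/\Lambda\ge 0.
\]
The final inequality holds because $S\ge0$ and $\Lambda>0$. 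Note that $\Lambda\ge1$ whenever $m\ge1$, since covering any single block already requires weight at least one.

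\textbf{Specialization.} The set of feasible $\lambda$ is a nonempty polyhedron: the coloring by singletons is feasible. The objective $\Lambda(\lambda)$ is bounded below by $0$, so the linear program attains its minimum. Taking an optimal $\lambda$ gives $\Lambda=\chi_f(G)$ and hence the stated $1/\chi_f(G)$ fraction.

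\textbf{Main obstacle.} The only delicate step is the bookkeeping in Step 1, namely splitting factors according to which single block of $I$ they touch. The point to check is that a singleton gain $g^k_{ib}$ counts exactly the factors containing $b$, and that independence of $I$ prevents any factor from being counted twice. The averaging in Step 2 is routine.
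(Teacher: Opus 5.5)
Your proposal is correct and follows essentially the same route as the paper: exact additivity of the verified singleton gains over each supported independent set (each factor scope meets $I$ in at most one block), followed by the averaging bound $\Lambda\max_{I:\lambda_I>0}W_I^k\geq\sum_I\lambda_IW_I^k\geq\sum_{(i,b)\in\cM_k}g_{ib}^k$ via the covering constraint. Your side remarks that $\Lambda\geq1$ and that the fractional-coloring linear program attains its minimum are small points the paper leaves implicit.
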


\begin{proof}
For readability, suppress the round index $k$ and write $x=x^k$, $\cM=\cM_k$, $g_{ib}=g_{ib}^k$, and $W_I=W_I^k$.
Fix any $I\in\cI(G)$ with $\lambda_I>0$.
For each objective factor $\ell_a(x_{N(a)})$ in~\eqref{eq:factors}, where $a\in[M]$, the independence of $I$ implies $|N(a)\cap I|\leq1$.
Hence each factor is affected by at most one proposal whose block lies in $I$, so the total objective decrease is exactly the sum of the corresponding singleton gains:
\begin{equation}\label{eq:fractional-add}
F(x)-F(x^I)
=
\sum_{\substack{(i,b)\in\cM\\ b\in I}}g_{ib}
=
W_I.
\end{equation}

Since $\sum_{I\in\cI(G)}\lambda_I=\Lambda$ and $\lambda_I\geq0$,
\begin{equation*}
\begin{aligned}
\Lambda\max_{\substack{I\in\cI(G)\\ \lambda_I>0}}W_I
\geq
\sum_{I\in\cI(G)}\lambda_IW_I
=
\sum_{(i,b)\in\cM}g_{ib}
\sum_{\substack{I\in\cI(G)\\ b\in I}}\lambda_I
\geq
\sum_{(i,b)\in\cM}g_{ib}.
\end{aligned}
\end{equation*}
The first inequality follows because the maximum score dominates every $W_I$ with positive weight.
The equality exchanges the order of summation.
The final inequality uses the covering constraint
$\sum_{\substack{I\in\cI(G)\\ b\in I}}\lambda_I\geq1$
and the nonnegativity $g_{ib}\geq0$.

By construction, $I_k$ attains the maximum score among the independent sets with $\lambda_I>0$.
Using~\eqref{eq:fractional-add} for $I_k$ and dividing the preceding inequality by $\Lambda$ gives~\eqref{eq:fractional-merge-bound}.
\end{proof}

Returning to the running example in~\eqref{eq:running}, assigning unit weight to the two color classes gives $\Lambda=2$.
Their scores are $2+1=3$ and $2$, so the certified candidate selects blocks $1$ and $3$ and reaches the optimum.
The guarantee above is also tight in the worst case, as the next theorem shows.

\begin{theorem}[Tightness of certified gain retention]\label{thm:tight}
For any finite nonempty dependency graph $G$, no coordinator that forms its update by selecting a subset of the submitted proposals can guarantee retaining more than a $1/\chi_f(G)$ fraction of the total verified singleton gain in the worst case.
More precisely, for every such $G$, there exists an instance satisfying Assumption~\ref{ass:structure} in which the submitted proposals have total verified gain $\chi_f(G)$, while every subset of them achieves gain at most one.
The bound remains tight when all singleton gains are required to be strictly positive, in which case the worst-case ratio can be made arbitrarily close to $1/\chi_f(G)$.
\end{theorem}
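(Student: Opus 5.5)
The plan is to use LP duality for the fractional chromatic number. The dual of \eqref{eq:fractional-coloring} is the fractional clique problem: maximize $\sum_b w_b$ subject to $w_b\ge 0$ and $\sum_{b\in I}w_b\le 1$ for every $I\in\cI(G)$. Strong duality gives an optimal $w^\star$ with $\sum_b w^\star_b=\chi_f(G)$, and $w^\star$ can be taken rational since the LP has rational data. We will use $w^\star$ as the vector of singleton gains. The instance must then have the property that a set of proposals has joint gain at most one unless it is independent, in which case its gain is $\sum_{b\in S}w^\star_b\le 1$.

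For the instance, take binary blocks $\cX_b=\{0,1\}$, base artifact $x=0$, one agent per block, each proposing $Y_b=1$. Define
\[
F(x)=\Bigl(1-\sum_b w^\star_b x_b\Bigr)\cdot\mathbf 1[\text{$\{b:x_b=1\}$ independent}]+\mathbf 1[\text{$\{b:x_b=1\}$ not independent}].
\]
This gives $F(0)=1$. Each singleton $\{b\}$ is independent, so $g_b=w^\star_b\ge0$ and the gains total $\chi_f(G)$. For any subset $S$ of proposals, $F(0)-F(\mathbf 1_S)$ equals $\sum_{b\in S}w^\star_b\le1$ if $S$ is independent and equals $0$ otherwise. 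So every subset achieves gain at most one.

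The obstacle is Assumption~\ref{ass:structure}. The dependency graph induced by $F$'s factorization has to be exactly $G$, or a subgraph of it, so that $\chi_f(G)$ is the relevant quantity. The plan is to write $F$ as $1-\sum_b w^\star_b x_b+\sum_{\{b,c\}\in E(G)}\phi_{bc}(x_b,x_c)$, where the pairwise penalty $\phi_{bc}(1,1)=P$ is large and $\phi_{bc}=0$ otherwise. Each factor then depends only on one block or one edge of $G$, so the induced graph is exactly $G$ (isolated vertices are fine). Independent sets get gain $\sum_{b\in S}w^\star_b\le1$. Non-independent sets get gain at most $\sum_b w^\star_b-P<0$ once $P>\chi_f(G)$, which is certainly at most one. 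A subset coordinator may also choose the empty set, with gain $0$. The retained fraction is therefore at most $1/\chi_f(G)$, which proves the second claim and hence the first.

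For strictly positive gains, $w^\star$ may have zeros. Replace it with $w^\varepsilon_b=(1-\varepsilon)w^\star_b+\varepsilon/m$ for small $\varepsilon\in(0,1)$. Every singleton is independent, so $\sum_{b\in I}1/m\le1$ on every $I\in\cI(G)$. The vector $w^\varepsilon$ is therefore a convex combination of dual-feasible points and remains feasible, with all entries strictly positive. Its total is $(1-\varepsilon)\chi_f(G)+\varepsilon\to\chi_f(G)$, and the same construction gives a best-subset ratio of at most $1/((1-\varepsilon)\chi_f(G)+\varepsilon)$, which tends to $1/\chi_f(G)$. Combined with Theorem~\ref{thm:fractional-merge}, this shows the worst-case ratio is exactly $1/\chi_f(G)$ in the limit.
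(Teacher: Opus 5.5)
Your proof is correct and follows the paper's argument. You take an optimal dual (fractional-clique) vector $w^\star$ as the singleton gains on binary blocks and add a large pairwise penalty on each edge of $G$. The induced dependency graph is then exactly $G$, and only independent subsets can gain, each by at most one. The only difference is in the strictly-positive case: you mix $w^\star$ with the uniform vector $\mathbf 1/m$, which stays dual-feasible because $|I|\le m$, so every subset still gains at most one. The paper instead adds $\varepsilon$ to each $w_b$ and bounds the best independent subset by $1+\varepsilon\alpha(G)$.
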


\begin{proof}
The dual of~\eqref{eq:fractional-coloring} maximizes $\sum_{b=1}^m w_b$ over $w_b\geq0$ subject to $\sum_{b\in I}w_b\leq1$ for every $I\in\cI(G)$.
Let $w$ be an optimal dual solution.
By strong duality, $\sum_{b=1}^m w_b=\chi_f(G)$.

Consider binary blocks $z\in\{0,1\}^m$ with initial state $z=0$, and let each block $b$ have a proposal that changes $z_b$ from zero to one.
For any $B>\sum_{b=1}^m w_b$, define
\begin{equation}\label{eq:tight-objective}
F(z)\eqdef C-\sum_{b=1}^m w_bz_b
+B\sum_{\{b,c\}\in E(G)}z_bz_c.
\end{equation}
The interaction terms correspond exactly to the edges of $G$, so this objective has dependency graph $G$.
The singleton gain of block $b$ is $w_b$, and hence the total verified singleton gain is $\chi_f(G)$.

Now consider any subset $S\subseteq[m]$ of the submitted proposals.
If $S$ is independent, its gain is $\sum_{b\in S}w_b\leq1$ by dual feasibility.
If $S$ is not independent, it contains at least one edge of $G$, so at least one penalty term is activated and its gain is at most $\sum_{b=1}^m w_b-B<0$.
Thus every subset achieves gain at most one, while the total singleton gain is $\chi_f(G)$.
Therefore no subset-based coordinator can guarantee a fraction greater than $1/\chi_f(G)$, matching Theorem~\ref{thm:fractional-merge}.

If strictly positive singleton gains are required, replace each $w_b$ by $w_b+\varepsilon$.
The best independent subset then gains at most $1+\varepsilon\alpha(G)$, while the total singleton gain becomes $\chi_f(G)+\varepsilon m$.
Choosing $B$ large enough to keep every non-independent subset unprofitable, the resulting ratio approaches $1/\chi_f(G)$ as $\varepsilon\downarrow0$.
\end{proof}

\paragraph{Ordinary coloring as a practical alternative.}

Ordinary graph coloring gives a directly usable special case.

\begin{corollary}[Ordinary-color guarantee]\label{cor:kappa}
Let $C_1,\ldots,C_\kappa$ be the color classes of any proper $\kappa$-coloring.
Setting $\lambda_{C_c}=1$ for $c\in[\kappa]$ and $\lambda_I=0$ for all other independent sets gives a feasible fractional coloring with $\Lambda=\kappa$.
Rule~\eqref{eq:fractional-merge-rule} then selects the color class with the largest total verified gain.
Writing the resulting candidate as $x^{k,\mathrm{col}}$, we obtain
\begin{equation}\label{eq:kappa-corollary}
F(x^k)-F(x^{k,\mathrm{col}})
\geq
\frac1\kappa\sum_{(i,b)\in\cM_k}g_{ib}^k.
\end{equation}
\end{corollary}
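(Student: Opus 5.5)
This follows directly from Theorem~\ref{thm:fractional-merge} applied to a specific fractional coloring built from the proper coloring. The plan has three short steps: verify feasibility of the induced weights, identify the merge rule's selection, and read off the bound.

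\emph{Feasibility.} Each color class $C_c$ of a proper coloring contains no edge of $G$, so $C_c\in\cI(G)$ and the assignment $\lambda_{C_c}=1$ is well defined. All weights are nonnegative. Since the classes partition $[m]$, every block $b$ lies in exactly one class, so
\[
\sum_{I\ni b}\lambda_I=1\geq 1 .
\]
The total weight is $\Lambda=\kappa$.

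One point needs care. If some class is empty, it is the empty independent set and contributes weight $1$ while covering nothing. The bound still holds with $\Lambda=\kappa$, so no special treatment is required. Likewise, the classes are distinct sets, so no two of them collapse onto the same $I$ and add their weights.

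\emph{Identifying the selection.} The sets with $\lambda_I>0$ are exactly $C_1,\ldots,C_\kappa$. Rule~\eqref{eq:fractional-merge-rule} therefore maximizes $W_{C_c}^k$ over $c\in[\kappa]$, which means it picks the color class with the largest total verified gain. This is the claimed description of $x^{k,\mathrm{col}}$.

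\emph{Conclusion.} Invoking Theorem~\ref{thm:fractional-merge} with this $\lambda$ and $\Lambda=\kappa$ under Assumption~\ref{ass:structure} gives~\eqref{eq:kappa-corollary} immediately. No step here is a genuine obstacle. The only thing to check is that the coloring is proper, since that is what makes each class independent and hence places it in the domain of $\lambda$.
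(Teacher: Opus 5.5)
Your proposal is correct and matches the paper, which likewise obtains the corollary by applying Theorem~\ref{thm:fractional-merge} to the unit-weight fractional coloring supported on the $\kappa$ color classes. One small quibble: if two or more classes are empty they coincide as the same set $\emptyset$, so your remark that the classes are distinct fails in that case, but then $\Lambda\le\kappa$ and the resulting bound is only stronger.
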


Corollary~\ref{cor:kappa} is obtained by restricting the fractional rule to the $\kappa$ color classes and assigning each unit weight.
For a separable loss, $G$ has no edges, so all retained proposals can be merged simultaneously and $\chi_f(G)=1$.
The fractional-coloring guarantee then 
retains the full sum of the verified gains.

A maximum-weight independent set may improve a round's candidate but must be resolved as gains change.
Fractional coloring is reusable for a fixed graph, though costly to compute; a greedy coloring instead guarantees $1/\kappa$.

\paragraph{Interacting edits.}
The dependency graph certifies when gains are additive, but an edge only indicates that two blocks may interact.
It does not determine whether that interaction is beneficial or harmful.
Two proposals may each repair a local defect and yield an additional improvement when applied together.
To quantify such effects, let $S\subseteq\cM_k$ be a subset of retained proposals and define its \emph{interaction surplus} by
\begin{equation}\label{eq:surplus}
I_k(S)\eqdef F(x^k)-F(x^{k,S})-\sum_{(i,b)\in S}g_{ib}^k,
\end{equation}
where $x^{k,S}$ applies exactly the proposals in $S$.
A positive surplus means that the joint gain exceeds the sum of the singleton gains, while a negative surplus means that it falls short.
If the blocks represented in $S$ form an independent set in $G$, then $I_k(S)=0$.
In the running example \eqref{eq:running}, the full union has gain $1$ while the singleton gains sum to $5$, giving interaction surplus $-4$.

Independent-set integration avoids such harmful interactions, but it can also miss beneficial ones.
Let $x^{k,\mathrm{all}}$ denote the candidate obtained by applying all retained proposals to the common base.
To capture favorable interactions that arise in this full union without sacrificing the certified guarantee, the \emph{optimistic} rule compares it with the safe candidate:
\begin{equation}\label{eq:optimistic}
x^{k+1}\in\operatorname*{arg\,min}_{z\in\{x^{k,\mathrm{safe}},x^{k,\mathrm{all}}\}}F(z).
\end{equation}
The safe candidate is chosen in the event of a tie and whenever the union is invalid.
Its objective value is already determined by additivity, so only the union requires an additional evaluation.

\begin{proposition}[Optimistic integration]\label{prop:optimistic}
Under Assumption~\ref{ass:structure} and exact, side-effect-free candidate evaluation, rule~\eqref{eq:optimistic} satisfies
\begin{equation}\label{eq:optimistic-bound}
F(x^k)-F(x^{k+1})
=
\max\{G_{\mathrm{safe}}^k,G_{\mathrm{all}}^k\}
\geq
G_{\mathrm{safe}}^k
\geq
\frac1\Lambda\sum_{(i,b)\in\cM_k}g_{ib}^k,
\end{equation}
where $G_{\mathrm{safe}}^k\eqdef F(x^k)-F(x^{k,\mathrm{safe}})$ and $G_{\mathrm{all}}^k\eqdef F(x^k)-F(x^{k,\mathrm{all}})$, with $G_{\mathrm{all}}^k=-\infty$ when the union is invalid.
\end{proposition}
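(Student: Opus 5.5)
The argument follows the definition of rule~\eqref{eq:optimistic} and then invokes Theorem~\ref{thm:fractional-merge}; there are three links in the chain~\eqref{eq:optimistic-bound}, handled in order.

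\textbf{Equality.} Because evaluation is exact and side-effect-free, $F(x^{k+1})$ is exactly the value of whichever candidate the rule returns. The rule returns a minimizer of $F$ over $\{x^{k,\mathrm{safe}},x^{k,\mathrm{all}}\}$, so $F(x^{k+1})=\min\{F(x^{k,\mathrm{safe}}),F(x^{k,\mathrm{all}})\}$. Subtracting from $F(x^k)$ turns the minimum into $\max\{G^k_{\mathrm{safe}},G^k_{\mathrm{all}}\}$.

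Ties are not a problem. When the two candidates tie, choosing the safe one leaves the value unchanged. When the union is invalid, it is excluded, so the rule returns the safe candidate. Setting $G^k_{\mathrm{all}}=-\infty$ makes the maximum equal $G^k_{\mathrm{safe}}$ in that case, so the displayed equality still holds. I would state this case split explicitly: (i) the union is valid, or (ii) it is invalid.

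\textbf{First inequality.} $\max\{G^k_{\mathrm{safe}},G^k_{\mathrm{all}}\}\ge G^k_{\mathrm{safe}}$ is immediate from the definition of the maximum.

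\textbf{Second inequality.} The bound $G^k_{\mathrm{safe}}\ge\frac1\Lambda\sum g^k_{ib}$ is exactly~\eqref{eq:fractional-merge-bound} in Theorem~\ref{thm:fractional-merge}. That theorem applies under Assumption~\ref{ass:structure} for any feasible $\lambda$, and the safe candidate is defined using the same $\lambda$ and the same rule.

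The only point needing care is that the safe candidate is always available and valid. It is built from retained proposals, each of which passed singleton verification, and by~\eqref{eq:fractional-add} its value is determined by additivity. This is the "main obstacle," though it is minor: it guarantees that the comparison in~\eqref{eq:optimistic} never falls back to something worse than the safe candidate.
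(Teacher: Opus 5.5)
Your proof is correct and follows the paper's argument. The rule picks the candidate with the larger gain, so the realized gain equals $\max\{G_{\mathrm{safe}}^k,G_{\mathrm{all}}^k\}\ge G_{\mathrm{safe}}^k$, and Theorem~\ref{thm:fractional-merge} supplies the final inequality; your explicit treatment of ties and of the invalid-union case ($G_{\mathrm{all}}^k=-\infty$) only spells out what the paper's one-line proof leaves implicit.
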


\begin{proof}
Rule~\eqref{eq:optimistic} selects the candidate with the larger gain, so its gain is at least $G_{\mathrm{safe}}^k$.
Theorem~\ref{thm:fractional-merge} gives the final inequality.
\end{proof}

Proposition~\ref{prop:optimistic} therefore shows that the optimistic rule preserves the same per-round guarantee as the safe rule while allowing the full union to be selected when it performs better.
This comparison is local to the current round, since the additional evaluation and the choice of the accepted artifact can affect the subsequent trajectory.

\paragraph{One round of the method.}
Algorithm~\ref{alg:round} summarizes one round of the optimistic integration method.

\begin{algorithm}[t]
\caption{Optimistic Integration}
\label{alg:round}
\begin{algorithmic}[1]
\Require Base artifact $x^k$, fractional coloring $\lambda$, deadline $h$
\Ensure Updated artifact $x^{k+1}$
\State Choose an assignment $\cM_k$ and send each assigned agent $x^k$ and its designated block
\State Collect proposals returned by deadline $h$
\State Verify each proposal separately against $x^k$ and compute the gains $g_{ib}^k$
\ForAll{$I$ with $\lambda_I>0$}
    \State $W_I^k \gets \sum_{(i,b)\in\cM_k:\,b\in I} g_{ib}^k$
\EndFor
\State $I_k \gets \argmax_{I:\lambda_I>0} W_I^k$
\State $x^{k,\mathrm{safe}} \gets x^{k,I_k}$
\State Construct the full-union candidate $x^{k,\mathrm{all}}$
\If{$x^{k,\mathrm{all}}$ is valid}
    \State $x^{k+1} \gets
    \arg\min_{z\in\{x^{k,\mathrm{safe}},x^{k,\mathrm{all}}\}} F(z)$
\Else
    \State $x^{k+1} \gets x^{k,\mathrm{safe}}$
\EndIf
\State Broadcast $x^{k+1}$
\end{algorithmic}
\end{algorithm}

The safe-only variant omits the full-union evaluation and sets $x^{k+1}=x^{k,\mathrm{safe}}$.
In the ordinary-color specialization of Corollary~\ref{cor:kappa}, the supported independent sets are simply the color classes of the chosen proper coloring.
The deadline $h$ applies to local proposal generation, while verification and communication contribute additional round time.

\section{Global Convergence from Local Progress}\label{sec:theory}

The per-round integration guarantee depends only on the verified singleton gains and the dependency structure.
To obtain a global convergence rate, we introduce three additional ingredients: local solver quality, a block error bound, and assignment coverage.
We then combine these factors with the integration guarantee to obtain the contraction rate.

\paragraph{Local solver quality.}
Let $\cH_k$ denote all information available before round $k$, including the current artifact, previous observations, and any context retained by the agents.
We model an assigned agent as an inexact block solver and measure its quality relative to $D_b(x^k)$, the best improvement available by changing block $b$ alone.

\begin{assumption}[Conditional local improvement]\label{ass:oracle}
For known lower bounds $\eta_{ib}(h)\in[0,1]$, the following holds at every round, every history reached by the protocol, and every assignment with positive conditional probability:
\begin{equation}\label{eq:oracle}
\ExpCond{g_{ib}^k}{\cH_k,\cM_k}
\geq
\eta_{ib}(h)D_b(x^k),
\qquad
(i,b)\in\cM_k.
\end{equation}
\end{assumption}

The parameter $\eta_{ib}(h)$ quantifies the expected quality of agent $i$ on block $b$ within deadline $h$ relative to the best available single-block improvement.
For example, $\eta_{ib}(h)=1/2$ means that the agent recovers, in expectation, at least half of $D_b(x^k)$, rather than solving half of its assigned tasks exactly.
Because the expectation includes timeouts and imperfect edits, these outcomes are already reflected in the guarantee.
The bound is conditioned on both the history and the full assignment, so agent outputs may be arbitrarily correlated.
At the extremes, $\eta_{ib}(h)=0$ imposes no capability requirement on the agent--block pair, whereas any positive uniform lower bound represents a substantive assumption on its performance.

\paragraph{Block error bound.}
Local solver quality alone is not sufficient for global progress, since even an exact block solver cannot improve a state when no individual block admits an improving move.
We therefore require the available single-block improvements to capture a nonvanishing fraction of the remaining global error.

\begin{assumption}[Block error bound]\label{ass:error}
There exists $\mu\in(0,1]$ such that, for every $k\geq0$, the iterates generated by the protocol satisfy almost surely
\begin{equation}\label{eq:error}
\sum_{b=1}^m D_b(x^k)\geq \mu\Delta(x^k).
\end{equation}
\end{assumption}

The parameter $\mu$ measures how much of the remaining objective gap is visible through improvements to individual blocks.
For a separable loss, the blockwise improvements add exactly to the global gap, so $\mu=1$.
More generally, a positive $\mu$ rules out suboptimal states at which every individual block is already locally optimal.

The role of this condition is illustrated by a simple two-block example.
Let $x_1,x_2\in\{0,1\}$ and consider
\begin{equation}\label{eq:trap}
F(x_1,x_2)\eqdef1+x_1+x_2-3x_1x_2.
\end{equation}
At $(0,0)$, the loss is $1$.
Changing either block alone increases the loss to $2$, whereas changing both blocks yields the global minimum $0$.
Hence $D_1(0,0)=D_2(0,0)=0$ while $\Delta(0,0)=1$, so no positive $\mu$ can satisfy~\eqref{eq:error} at this state.
This obstruction arises from the task structure rather than the quality of the local solver.
A method restricted to individually improving block updates cannot leave $(0,0)$, and progress requires either a joint update or acceptance of an intermediate increase in the objective.

\paragraph{Assignment coverage.}
Local solver quality translates into global progress only if useful agents are assigned to each block sufficiently often.
Let $\mathfrak M$ denote the finite set of agent--block matchings.
Fix a distribution $p=(p_{\cM})_{\cM\in\mathfrak M}$ and assume that, at every round, the matching is drawn according to
\begin{equation}
\Pr(\cM_k=\cM\mid\cH_k)=p_{\cM},
\qquad
\sum_{\cM\in\mathfrak M}p_{\cM}=1.
\end{equation}
For each agent--block pair, define its marginal assignment probability by
\begin{equation}\label{eq:coverage}
q_{ib}\eqdef\sum_{\cM\ni(i,b)}p_{\cM},
\qquad
a_b\eqdef\sum_iq_{ib}\eta_{ib}(h),
\qquad
a\eqdef\min_ba_b.
\end{equation}
The quantity $a_b$ combines how often block $b$ is assigned to agent $i$ with the corresponding local-solver quality $\eta_{ib}(h)$.
Equivalently, it measures the guaranteed fraction of the best block-$b$ improvement recovered on average under the assignment distribution.
The quantity $a$ is the corresponding worst-case coverage across blocks.
Since a matching assigns at most one agent to each block and $\eta_{ib}(h)\leq1$, we have $0\leq a_b\leq1$ and hence $0\leq a\leq1$.
If $a=0$, then at least one block receives no guaranteed improvement under the assignment scheme, so these bounds cannot certify positive worst-case progress.

\paragraph{Combining the factors.}
The preceding quantities combine directly with the per-round integration guarantee.
Under the assignment distribution and Assumption~\ref{ass:oracle}, the expected verified gain is at least an $a$ fraction of the total available blockwise improvement.
Assumption~\ref{ass:error} lower bounds this quantity by a $\mu$ fraction of the remaining objective gap, while the integration rule retains at least a $1/\Lambda(\lambda)$ fraction of the verified gain.
We therefore define
\begin{equation}\label{eq:theta}
\theta_\lambda\eqdef\frac{\mu a}{\Lambda(\lambda)}.
\end{equation}
Thus, in expectation, each round reduces the current objective gap by at least a $\theta_\lambda$ fraction.
For an optimal fractional coloring, $\Lambda(\lambda)=\chi_f(G)$ and hence $\theta_\lambda=\mu a/\chi_f(G)$.
Using a proper $\kappa$-coloring instead gives the corresponding factor $\mu a/\kappa$.

\begin{theorem}[Geometric convergence]\label{thm:rate}
Fix either the safe-only or optimistic integration rule, a deadline $h$, a feasible fractional coloring $\lambda$, and the assignment distribution $p$ above.
Suppose Assumptions~\ref{ass:structure}--\ref{ass:error} hold along the resulting trajectory.
Then, for deterministic $x^0$ and every integer $K\geq0$,
\begin{equation}\label{eq:rate}
\Exp{\Delta(x^K)}
\leq
(1-\theta_\lambda)^K\Delta(x^0).
\end{equation}
\end{theorem}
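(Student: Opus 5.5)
The plan is to prove a one-step conditional contraction,
\begin{equation*}
\ExpCond{\Delta(x^{k+1})}{\cH_k}\leq(1-\theta_\lambda)\Delta(x^k),
\end{equation*}
and then unroll it with the tower property and induction on $K$, using that $x^0$ is deterministic. Since $\Delta(x^{k+1})=\Delta(x^k)-\bigl(F(x^k)-F(x^{k+1})\bigr)$ and $x^k$ is $\cH_k$-measurable, it suffices to lower bound the expected one-round decrease.

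\textbf{Step 1: per-round retention.} For either rule, Theorem~\ref{thm:fractional-merge} (safe rule) or Proposition~\ref{prop:optimistic} (optimistic rule) gives, for every realization,
\begin{equation*}
F(x^k)-F(x^{k+1})\geq\frac1\Lambda\sum_{(i,b)\in\cM_k}g_{ib}^k.
\end{equation*}
This holds pathwise, so no independence among agent outputs is needed.

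\textbf{Step 2: conditional expectations.} Condition first on $(\cH_k,\cM_k)$ and apply Assumption~\ref{ass:oracle} termwise. This is legitimate because $\cM_k$ is fixed under this conditioning and has finitely many pairs, and it gives a bound of at least $\frac1\Lambda\sum_{(i,b)\in\cM_k}\eta_{ib}(h)D_b(x^k)$. Then take the expectation over $\cM_k\sim p$ given $\cH_k$. Since $D_b(x^k)$ is $\cH_k$-measurable, exchanging the sums gives
\begin{equation*}
\frac1\Lambda\sum_b D_b(x^k)\sum_i q_{ib}\eta_{ib}(h)=\frac1\Lambda\sum_b a_bD_b(x^k)\geq\frac a\Lambda\sum_b D_b(x^k).
\end{equation*}
The last inequality uses $D_b\geq0$, which holds because $y=x_b$ is admissible in the minimum defining $D_b$. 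Assumption~\ref{ass:error} then bounds this below by $\frac{\mu a}{\Lambda}\Delta(x^k)=\theta_\lambda\Delta(x^k)$. One detail to handle here: Assumption~\ref{ass:oracle} is stated only for assignments of positive probability, and those are exactly the ones that carry weight in the expectation.

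\textbf{Step 3: unrolling.} This yields the contraction. Note that $\theta_\lambda\in[0,1]$, because $\mu,a\le1$ and $\Lambda\geq1$ (any block must be covered with weight at least one, and $G$ is nonempty). Hence $1-\theta_\lambda\geq0$ and the inequality can be iterated:
\begin{equation*}
\Exp{\Delta(x^{k+1})}\leq(1-\theta_\lambda)\Exp{\Delta(x^k)}.
\end{equation*}

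\textbf{Main obstacle.} The delicate part is the measure-theoretic bookkeeping in Step 2. It needs the nested conditioning on $\cH_k$ and then $(\cH_k,\cM_k)$, the measurability of $D_b(x^k)$ and $\Delta(x^k)$ with respect to $\cH_k$, and integrability. Integrability is easy here because $\cX$ is finite, so $F$ and $\Delta$ are bounded and nonnegative.
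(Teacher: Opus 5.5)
Your proposal is correct and follows the paper's proof essentially step for step: it combines the pathwise retention bound from Theorem~\ref{thm:fractional-merge} or Proposition~\ref{prop:optimistic}, conditioning on each matching to apply Assumption~\ref{ass:oracle}, and exchanging sums to get $\sum_b a_b D_b(x^k)\geq a\sum_b D_b(x^k)$, then invokes the block error bound and unrolls the conditional contraction by the tower property and induction. Your explicit justification that $\Lambda\geq1$ (hence $1-\theta_\lambda\geq0$) and your remark that finiteness of $\cX$ guarantees integrability are small details that the paper leaves implicit.
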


The conditional-expectation proof, given in the supplementary material, formalizes the preceding argument and yields geometric decay of the expected objective gap with contraction factor $1-\theta_\lambda$.
The guarantee is an average one across the randomness of the protocol and does not require every realized round to achieve the same fractional reduction.
Through $\theta_\lambda=\mu a/\Lambda(\lambda)$, the rate separates the effects of task structure, agent quality and assignment coverage, and proposal integration.
This decomposition also provides the basis for the next section, where we choose assignments by balancing their expected progress against their time and token costs.

\section{Work and Cost Allocation}\label{sec:allocation}

For a fixed task, fractional coloring, and deadline, the assignment policy can improve the coverage factor $a$ by directing useful agents toward poorly covered blocks.
Optimizing this policy is therefore a separate problem over assignments.

\paragraph{Assignment optimization.}
Fix a deadline $h$ and a per-round expected token budget $b_0$.
Let $C(\cM,h)$ denote an upper bound on the token cost of using matching $\cM$ at deadline $h$, including communication and verification.
Given the capability bounds $\eta_{ib}(h)$, we choose the assignment distribution to maximize the weakest block coverage:
\begin{equation}\label{eq:lp}
\begin{aligned}
\max_{p,a}\quad &a\\
\text{subject to}\quad
&\sum_{\cM}p_{\cM}\sum_{i:(i,b)\in\cM}\eta_{ib}(h)\geq a
&&\forall b,\\
&\sum_{\cM}p_{\cM}C(\cM,h)\leq b_0,\\
&\sum_{\cM}p_{\cM}=1,
\qquad
p_{\cM}\geq0.
\end{aligned}
\end{equation}
The coverage constraints ensure $a_b\geq a$ for every block, while the budget constraint limits the expected token cost.
For fixed $h$ and capability bounds $\eta_{ib}(h)$, the linear program maximizes the coverage factor in the certified convergence rate over fixed assignment distributions.

\paragraph{Rate--cost trade-off.}
Improving the per-round convergence rate does not necessarily reduce the total resources required to reach a target error.
For example, increasing the deadline $h$ can improve the capability bounds $\eta_{ib}(h)$ and hence the coverage factor $a$, while also increasing the duration and token cost of each round.
A similar trade-off arises with optimistic integration, where evaluating the full union adds per-round cost but may reduce the number of rounds when the union yields greater progress.

Let $d_{\lambda}(h)$ bound the duration of a safe-only round, including proposal generation, verification, communication, and other overhead, and let $\tau_{\mathrm{union}}(h)$ bound the additional time required to evaluate the full union.
For optimistic integration, define
\begin{equation}\label{eq:optimistic-cost}
\begin{gathered}
d(h)\eqdef d_{\lambda}(h)+\tau_{\mathrm{union}}(h),\\
C(\cM,h)\eqdef C_{\lambda}(\cM,h)+C_{\mathrm{union}}(\cM,h),
\end{gathered}
\end{equation}
where $C_{\lambda}(\cM,h)$ and $C_{\mathrm{union}}(\cM,h)$ are the corresponding token-cost bounds.
For an assignment distribution $p$, the expected token cost per round is
\begin{equation}
\overline C(h)\eqdef\sum_{\cM}p_{\cM}C(\cM,h).
\end{equation}

Suppose $a>0$ and let $0<\varepsilon<\Delta(x^0)$ be a target expected error.
By Theorem~\ref{thm:rate} and $1-\theta_\lambda\leq e^{-\theta_\lambda}$,
\[
\Exp{\Delta(x^K)}
\leq
e^{-K\theta_\lambda}\Delta(x^0).
\]
A sufficient number of rounds to ensure $\Exp{\Delta(x^K)}\leq\varepsilon$ is therefore
\begin{equation}\label{eq:horizon}
K_\varepsilon
\eqdef
\left\lceil
\frac{\Lambda(\lambda)}{\mu a}
\log\frac{\Delta(x^0)}{\varepsilon}
\right\rceil.
\end{equation}
Let $T_\varepsilon$ denote the corresponding elapsed-time bound and $\overline C_\varepsilon$ the expected total token-cost bound.
Then
\begin{equation}\label{eq:resource-bounds}
T_\varepsilon
\leq
d(h)K_\varepsilon,
\qquad
\overline C_\varepsilon
\leq
\overline C(h)K_\varepsilon.
\end{equation}

These bounds make the deadline trade-off explicit.
Increasing $h$ may reduce $K_\varepsilon$ by improving the capability bounds and hence the coverage factor $a$, while simultaneously increasing the per-round duration $d(h)$ and expected token cost $\overline C(h)$.
Candidate deadlines can therefore be compared using the total-time bound $d(h)K_\varepsilon$ together with the corresponding token-cost constraint.

\section{Experiments}\label{sec:experiments}

We examine whether verified integration resolves conflicts among local repairs, when parallel work offsets its coordination cost, and whether fractional coloring retains more gain from the same proposals.
The four repair benchmarks use Python tasks with declared module dependencies and an objective $F$ that counts failed checks.
Agents see the specification, current code, and visible checks.
The coordinator uses visible and hidden checks to determine acceptance and stopping. A run is successful when its final artifact passes all checks.
We also examine proposal integration on public road networks and course-timetabling instances, and the scheduling of parallel route searches on vehicle-routing instances.

\paragraph{Protocols and evaluation.}
For the conflict and scale benchmarks, we compare safe color-class integration (\textsf{COL}) and optimistic integration (\textsf{OPT}) with four baselines.
\textsf{MA} applies all parallel proposals, while \textsf{MAG} accepts their union only when it lowers $F$.
\textsf{SGL} assigns the whole program to one agent, and \textsf{SEQ} visits modules sequentially. Both baselines accept only improving edits.
All protocols start from the same faulty artifact.
Each call starts with a fresh context, and only edits to its assigned modules are kept.
The parallel rules share first-round proposals and subsequently work from their own accepted artifacts.
Appendix~\ref{app:bench} provides further experimental details.

We report input-token-equivalent cost, which weights ordinary input, cache-write, and cache-read tokens and excludes output tokens.
For modeled completion time, each round contributes the duration of its slowest agent call, together with an assumed verification latency $t_{\mathrm{ver}}$ for each verification stage that must complete before the procedure can continue.
Checks performed in parallel within the same stage incur this latency only once.
Agent calls within a round are assumed to start simultaneously.
The accounting formula and sensitivity to $t_{\mathrm{ver}}$ are given in Appendices~\ref{app:bench}--\ref{app:scale}.

\subsection{Repairing conflicting edits}

The first benchmark contains seven three-module tasks whose dependency graphs are paths, with $\chi_f(G)=\chi(G)=2$.
In six tasks, either of two adjacent modules can perform a required transformation, but applying it twice causes an error.
The seventh task states interface ownership explicitly.
We run three replicates per task with \texttt{claude-haiku-4-5} and one replicate with \texttt{claude-sonnet-5}, giving 21 and seven task--replicate units, respectively.
To diagnose a conflict, we test the two adjacent proposals while keeping the remaining module fixed to a known-good implementation.
We classify the two proposals as conflicting when each passes the interface check individually, but applying both together causes the check to fail.

\begin{table}[t]
\caption{Success rates on the repair benchmark.
The upper panel groups runs by model.
The lower groups the same runs by whether the two adjacent proposals conflict in the first round.
Each entry reports successful runs over total runs.}
\label{tab:repair}
\centering
\small
\renewcommand{\arraystretch}{1.10}
\begin{tabular*}{\columnwidth}{@{\extracolsep{\fill}}lcccccc@{}}
\toprule
 & \textsf{SGL} & \textsf{SEQ} & \textsf{MA} & \textsf{MAG} & \textsf{COL} & \textsf{OPT}\\
\midrule
\multicolumn{7}{@{}l}{\textit{By model}}\\
Haiku & 21/21 & 19/21 & 12/21 & 13/21 & 21/21 & 21/21\\
Sonnet & 7/7 & 7/7 & 5/7 & 5/7 & 7/7 & 7/7\\
\addlinespace[2pt]
\multicolumn{7}{@{}l}{\textit{By first-round conflict}}\\
Conflict & 11/11 & 11/11 & 0/11 & 1/11 & 11/11 & 11/11\\
No conflict & 17/17 & 15/17 & 17/17 & 17/17 & 17/17 & 17/17\\
\bottomrule
\end{tabular*}
\end{table}

Table~\ref{tab:repair} shows that both verified rules, \textsf{COL} and \textsf{OPT}, solve all 28 task--replicate units.
\textsf{MA} solves all 17 units without a first-round conflict but none of the 11 conflict units, while \textsf{MAG} recovers only one of these conflicts.
Task T1 illustrates this failure mode.
Under \textsf{MA}, the two agents repeatedly add or remove the same transformation, producing a failure-count trajectory of $21,5,5,5$.
\textsf{OPT} verifies the removals separately and accepts one, leaving a single copy of the transformation and reducing the trajectory to $21,5,0$.
Across both models, \textsf{OPT} selected the color candidate in 11 of 37 union evaluations, and all 108 audited color merges satisfied the additivity condition.

These small tasks favor the single-agent baseline in resource use.
\textsf{SGL} solves all 28 units with a median of one agent call.
For \texttt{claude-haiku-4-5}, its input cost is roughly one quarter that of \textsf{COL} and \textsf{OPT}, while its modeled completion time is about one third as large when $t_{\mathrm{ver}}=0$.
This additional coordination overhead becomes consequential under tight budgets.
Under an input-cost budget of nine units, where one unit is the median input-token-equivalent cost of a \texttt{claude-haiku-4-5} agent call and each objective evaluation is charged one additional unit, \textsf{MA} solves 12 of 21 runs, compared with eight for both \textsf{COL} and \textsf{OPT}.
Appendix~\ref{app:results} reports the complete budget comparisons and representative failure traces.

\subsection{Scaling compatible parallel work}

The second benchmark increases the amount of available parallel work using twelve \texttt{claude-haiku-4-5} instances with 4--32 modules, with three instances at each size.
Each module contains five bugs and an explicit interface contract.
The resulting dependency graphs are sparse and bipartite, with $\chi_f(G)=\chi(G)=2$.
To isolate the effect of the coordination protocol, agents are not allowed to delegate work further.
All protocols solve every instance on which they are run, with \textsf{SEQ} omitted for the 32-module instances.
\textsf{SGL} completes every instance in a single session, and no genuine proposal conflicts are observed.

\begin{table}[t]
\caption{Scaling results as the number of modules $m$ increases.
Entries are geometric means over three instances at each size.
Input cost excludes output tokens and verification charges.}
\label{tab:scale}
\centering
\small
\setlength{\tabcolsep}{3pt}
\renewcommand{\arraystretch}{1.10}
\begin{tabular*}{\columnwidth}{@{\extracolsep{\fill}}rcccccc@{}}
\toprule
& \multicolumn{2}{c}{Completion time, $t_{\mathrm{ver}}=0$ (min)}
& \multicolumn{3}{c}{\textsf{SGL}/\textsf{OPT} speedup}
& \textsf{OPT}/\textsf{SGL} \\
\cmidrule(lr){2-3}
\cmidrule(lr){4-6}
$m$
& \textsf{SGL}
& \textsf{OPT}
& $t_{\mathrm{ver}}=0$
& $60$
& $300$
& Input cost \\
\midrule
4  & 2.4  & 3.5 & 0.68 & 0.67 & 0.67 & 2.23 \\
8  & 4.2  & 2.9 & 1.43 & 1.04 & 0.79 & 2.23 \\
16 & 9.2  & 4.5 & 2.05 & 1.43 & 0.91 & 1.94 \\
32 & 15.3 & 4.1 & 3.69 & 2.42 & 1.32 & 2.02 \\
\bottomrule
\end{tabular*}
\end{table}

Table~\ref{tab:scale} shows that the benefit of parallel execution grows with task size.
At 32 modules, \textsf{OPT} reduces the modeled completion time from 15.3 minutes for \textsf{SGL} to 4.1 minutes when $t_{\mathrm{ver}}=0$, while using $2.02\times$ the input cost.
With verification latencies of 60 and 300 seconds, the corresponding speedups decrease from $3.69\times$ to $2.42\times$ and $1.32\times$.
Since no proposal conflicts arise in this benchmark, \textsf{OPT} never invokes its fallback rule.
When $t_{\mathrm{ver}}=0$, \textsf{MA} and \textsf{OPT} have the same modeled completion time.
For $t_{\mathrm{ver}}>0$, \textsf{OPT} is slower because of its additional verification steps.
Appendix~\ref{app:scale} reports the instance-level results.

\subsection{Comparing ordinary and fractional coloring}

\begin{figure}[htbp]
\centering
\includegraphics[width=\columnwidth]{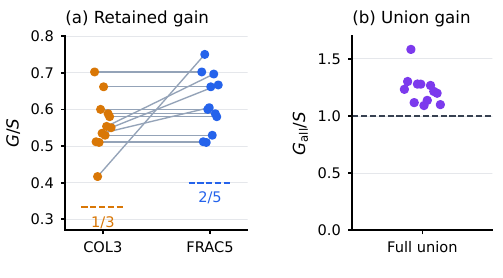}
\caption{Retained gain and union interactions on the twelve $C_5$ units, normalized by the sum $S$ of positive verified singleton gains.
(a) Paired gains for ordinary (\textsf{COL3}) and fractional (\textsf{FRAC5}) coloring; lines join identical proposal sets. Dashed segments mark the guarantees $1/3$ and $2/5$.
(b) Directly evaluated full-union gain; the dashed line at $1$ is the additive baseline.}
\Description{C5-only two-panel diagnostic. The left panel connects ordinary and fractional retained gains for each of twelve paired proposal sets. The right panel shows full-union gains normalized by positive singleton gains.}
\label{fig:graph-diagnostic}
\end{figure}

To isolate the effect of fractional coloring, the third benchmark uses $C_5$, for which $\chi(C_5)=3$ and $\chi_f(C_5)=5/2$.
We construct twelve units spanning four domains and three rotations of an ordinary coloring, fixed before the runs.
Each unit assigns one fresh agent to each of the five modules.
Using the same first-round proposals, \textsf{COL3} chooses among the three classes of the fixed ordinary coloring, while \textsf{FRAC5} evaluates all five independent pairs.
The corresponding guaranteed retained fractions are $1/3$ and $2/5$.

On identical proposal sets, \textsf{FRAC5} never retains less verified gain than \textsf{COL3} and improves over it in four of the twelve units.
The mean retained fraction increases from $0.5615$ to $0.6167$ (Figure~\ref{fig:graph-diagnostic}).
The full union gains more than the sum of the verified singleton gains in all twelve units, so optimistic integration selects the union throughout this benchmark.
Appendix~\ref{app:fractional} gives the complete benchmark and comparison details.

\subsection{Repairing held-out tasks over multiple rounds}

We next examine whether the benefit of verified integration persists over several repair rounds.
The fourth benchmark contains sixteen held-out interface-migration designs with five or ten modules, arranged on $C_5$ or prism graphs.
Three replicates per design give 48 task--replicate units.
With \texttt{Qwen2.5-Coder-32B-Instruct}, we compare the four parallel rules above with safe fractional integration (\textsf{FRAC}) and its optimistic variant (\textsf{FOPT}).
All six rules receive the same first-round proposals and then continue from their own accepted programs for at most three rounds.

\textsf{FOPT} reduces the mean number of failed checks from 34.38 to 6.29, compared with 20.02 for \textsf{MA}.
Its final loss is lower on every task--replicate unit and every design mean (Figure~\ref{fig:heldout-paired}).
The first-round fractional candidate also has lower loss than the full union on 47 of the 48 units, showing how selecting compatible edits avoids failures caused by their combination.
None of the six parallel rules fully solves a task within three rounds.

Fractional coloring retains more gain than ordinary coloring on 30 of the 48 first-round proposal sets.
The difference is smaller at termination: \textsf{FRAC} and \textsf{COL} leave means of 6.17 and 6.19 failed checks.
Later proposals depend on the program accepted by each rule, so the initial advantage need not persist through subsequent rounds.
A single agent restricted to one module per call leaves a mean of 12.27 failed checks; allowing multi-file edits and up to twice \textsf{FOPT}'s token use gives 20.52, with frequent output-parsing failures.
Appendix~\ref{app:interface} gives the control protocols, resource accounting, and results by graph family.

\begin{figure}[t]
\centering
\includegraphics[width=\columnwidth]{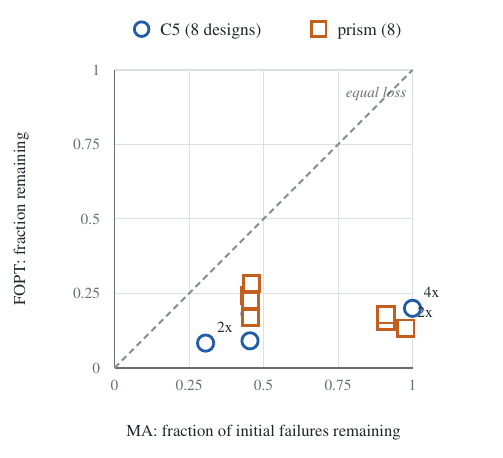}
\caption{Remaining failed-check fractions on sixteen held-out interface-migration designs after up to three rounds. Each point is the mean final loss over three replicates divided by the initial loss. Circles denote $C_5$ and squares denote prism graphs; numerals identify coincident points. Points below the diagonal favor \textsf{FOPT} over \textsf{MA}.}
\Description{Paired scatterplot of normalized final failed-check counts. All sixteen design points lie below the diagonal of equal loss, indicating fewer remaining failures for FOPT than for merge-all.}
\label{fig:heldout-paired}
\end{figure}

\subsection{Integrating proposals on road networks}\label{sec:road-merger}

We test the merger mechanism on factors taken directly from three public road networks~\cite{snap-roadnet}. A deterministic, proposal-blind structural screen selected 30 nonoverlapping held-out regions, ten per network, each containing five connected blocks of 32 road vertices whose \emph{actual interblock roads} form an induced $C_5$. Three fixed initial-label seeds per region yield 90 paired region--seed units. Each road edge touching a selected block contributes one factor to a binary graph-cut objective. Five separate CPU agents each make 30,000 balanced-swap attempts within one block. A common guarded-union warm start precedes the comparison. All integration rules then score the same five fresh, individually verified proposals from the same incumbent. Appendix~\ref{app:road-merger} gives the frozen screen and protocol.

\begin{table}[t]
\caption{Held-out road-network proposal integration: sums of graph-cut edges removed over 30 paired region--seed units per network. Larger is better. \textsf{COL} uses one ordinary coloring fixed before proposals; \textsf{FRAC} considers the five supported $C_5$ independent pairs.}
\label{tab:road-merger}
\centering
\small
\setlength{\tabcolsep}{4pt}
\begin{tabular}{lrrrr}
\toprule
Network & \textsf{COL} & \textsf{FRAC} & \textsf{MAG} & \textsf{MA} \\
\midrule
Pennsylvania & 283 & 343 & 68 & 14 \\
California & 278 & 305 & 114 & 87 \\
Texas & 238 & 274 & 60 & 28 \\
\midrule
Total & 799 & \textbf{922} & 242 & 129 \\
\bottomrule
\end{tabular}
\end{table}

The fractional-support rule retains 922 of 1,617 verified singleton-gain units (57.0\%), compared with 799 (49.4\%) for the pre-fixed ordinary coloring, a 15.4\% relative increase (Table~\ref{tab:road-merger}). It is strictly better in 52 of 90 paired runs and ties in the rest; no loss is possible here by construction because its support dominates the pre-fixed color classes for nonnegative gains. The three-seed region means favor it in 28 of 30 regions and tie in two. Unguarded full union increases the graph-cut objective in 31 runs despite every nonnull proposal improving the common incumbent alone.

Adaptive ordinary coloring, chosen after observing singleton gains, can include the highest-gain independent pair and also retains 922 units across all 90 runs. Fractional and adaptive ordinary coloring therefore attain the same total gain, exceeding that of the coloring fixed in advance.

\subsection{Integrating proposals in course timetabling}

We evaluate verified integration on nine ITC-2007 timetabling instances~\cite{digaspero2007itc}, selected for induced $C_5$ course subgraphs before proposal generation. Five computational agents independently change lecture periods for separate courses from a common timetable, keeping rooms fixed; dependencies follow shared teachers, curricula, or fixed rooms. On identical proposals, \textsf{FRAC} retains 10\% more improvement than fixed \textsf{COL}, winning on two instances and tying on seven (Table~\ref{tab:itc-merger}). All 45 compatible-pair merges satisfy exact additivity, and the selected compatible updates pass the official validator. \textsf{MA} creates a room conflict on \texttt{comp13}; adaptive ordinary coloring matches \textsf{FRAC}.

\begin{table}[t]
\caption{Penalty reductions and feasible outcomes on nine ITC-2007 instances. Higher reduction is better. The \textsf{MA} total is omitted because one timetable is infeasible.}
\label{tab:itc-merger}
\centering
\small
\begin{tabular}{lcc}
\toprule
Method & Penalty reduction & Feasible \\
\midrule
Fixed \textsf{COL} & 150 & $9/9$ \\
\textsf{FRAC} & \textbf{165} & $9/9$ \\
\textsf{MA} & --- & $8/9$ \\
\bottomrule
\end{tabular}
\end{table}

\subsection{Scheduling parallel route searches}

We test a different use of fractional coloring: scheduling preemptible search by agents that hold exclusive locks on their assigned pairs of vehicle routes. On each of four public instances of the capacitated vehicle-routing problem (CVRP), five stochastic local-search agents are assigned consecutive pairs of routes, deliberately inducing a $C_5$ conflict graph. Nonadjacent agents can search concurrently on two CPU workers. Both schedules allocate exactly 500,000 candidate-move evaluations to every agent. A fixed ordinary coloring executes the compatible groups $(0,2)$, $(1,3)$, $(4)$ in three full-budget waves; a fractional cover executes $(0,2)$, $(1,3)$, $(2,4)$, $(3,0)$, $(4,1)$ in five half-budget waves. Thus $\chi(C_5)=3$ and $\chi_f(C_5)=5/2$ predict a shorter critical path under this locking and preemption model.

\begin{table}[t]
\caption{Wall-time reduction under fractional route-search scheduling. Each instance has three paired repeats. ``Faster pairs'' counts runs in which the fractional schedule is faster than the ordinary schedule. The final column gives the median paired reduction.}
\label{tab:routing-schedule}
\centering
\small
\begin{tabular}{lcc}
\toprule
Public CVRP instance & Faster pairs & Median wall reduction \\
\midrule
Loggi-n401-k23 & $3/3$ & $14.74\%$ \\
Loggi-n501-k24 & $3/3$ & $15.27\%$ \\
ORTEC-n242-k12 & $3/3$ & $13.52\%$ \\
ORTEC-n323-k21 & $3/3$ & $15.76\%$ \\
\midrule
All pairs & $12/12$ & $15.00\%$ \\
\bottomrule
\end{tabular}
\end{table}

Table~\ref{tab:routing-schedule} reports measured wall time, rather than the modeled latency used in the repair scaling benchmark. Across the twelve pairs, fractional scheduling is faster by 13.16--17.28\%; the median reduction is 15.00\%. Summed agent CPU time is nearly unchanged (fractional/integral median ratio 1.008). All 24 final route sets satisfy capacity and customer-coverage constraints, but route cost is better under the fractional schedule in only four pairs, tied in five, and worse in three. Appendix~\ref{app:routing-schedule} gives the full paired times and route-cost differences.

The $C_5$ was induced by a deterministic route assignment, not discovered as a natural dependency graph. Two inputs were used during development and the next two were selected after observing initial outcomes. The speedup depends on holding route-pair locks throughout search chunks. An immutable-snapshot system that parallelizes all searches and serializes only commits is outside this comparison.

\FloatBarrier
\section{Conclusion}
We formulate parallel agent collaboration as verified proposal integration. Evaluating proposals against a common artifact yields local gains, while the dependency graph identifies compatible sets whose gains add exactly. Fractional coloring gives the tight worst-case retention factor $1/\chi_f(G)$, and evaluating the full union allows the coordinator to capture favorable interactions. Under the local-improvement and block error-bound assumptions, repeated integration converges geometrically. The rate separates task structure, agent allocation, and proposal integration, providing a basis for choosing assignments and balancing progress against time and cost.

The experiments show how compatibility shapes both the quality and cost of parallel work. Selecting compatible proposals resolves conflicts between individually useful repairs and improves retained gain over a fixed coloring on road-network objectives. Adaptive ordinary coloring matches the latter result, linking the improvement to the compatible sets considered. Parallel execution reduces modeled completion time on larger compatible repair tasks at higher input cost, while fractional scheduling shortens measured route-search time under exclusive route access. These results connect the integration of local edits and the scheduling of parallel work through their shared dependency structure.

\label{main-text-end}

\begin{acks}
The research reported in this publication was supported by funding from King Abdullah University of Science and Technology (KAUST): i) KAUST Baseline Research Scheme, ii) CRG Grant ORFS-CRG12-2024-6460, and iii) Center of Excellence for Generative AI, under award number 5940.
\end{acks}

\clearpage
\onecolumn
\section*{Appendix}
\label{supplement-start}

\appendix
The appendix gives the experimental protocols and detailed results, followed by controlled examples of allocation and merging and the convergence proof.

\section*{AI use statement}
We used OpenAI ChatGPT/Codex and Anthropic Claude to discuss mathematical arguments and experimental design, implement experiments, check literature, and edit the manuscript.
The authors reviewed AI-assisted content, checked the proofs and citations, and validated the reported experimental results.

\section{Repair benchmark and evaluation protocols}\label{app:bench}

\begin{table}[h]
\caption{Tasks in the three-module repair benchmark. Each task has a path dependency graph. ``Lines'' counts faulty editable code; ``Checks'' gives visible/hidden check counts; and $F(x^0)$ is the initial number of failures. ``Double'' counts the integration failures caused by applying the interface transformation in both adjacent modules when all other code is correct.}
\label{tab:tasks}
\centering
\footnotesize
\setlength{\tabcolsep}{3pt}
\begin{tabular}{llllrrrr}
\toprule
Task & Modules (blocks) & Ambiguous interface & Edge & Lines & Checks & $F(x^0)$ & Double\\
\midrule
T1 invoice & parsing, pricing, checkout & coupon discount & pricing--checkout & 44 & 25/10 & 21 & 5\\
T2 profiles & users, cards, pages & HTML escaping & users--cards & 24 & 18/8 & 19 & 4\\
T3 calendar & timeparse, planner, agenda & UTC offset & timeparse--planner & 49 & 23/10 & 26 & 5\\
T4 CSV export & records, csvrow, exporter & CSV quoting & records--csvrow & 20 & 19/9 & 24 & 6\\
T5 blog URLs & slugs, urls, sitemap & percent-encoding & slugs--urls & 23 & 18/9 & 25 & 5\\
T6 logging & events, formatter, sink & log escaping & formatter--sink & 26 & 20/8 & 26 & 6\\
C0 pilot (control) & parse, total, receipt & none & --- & 9 & 24/6 & 28 & ---\\
\bottomrule
\end{tabular}
\end{table}

\paragraph{Task construction.}
The seven tasks contain 26--35 checks and 9--49 lines of faulty editable code each, with 207 checks in total. Each task provides a specification, three faulty modules, and 13--19 lines of fixed code connecting them. Unit checks involve one module, while integration checks involve two adjacent modules. Each factor $\ell_a$ in~\eqref{eq:factors} indicates whether one visible or hidden check fails.

In T1--T6, either of two adjacent modules can perform the required interface transformation. We verified that assigning the transformation to either module gives a solution with $F=0$, whereas applying it in both modules produces the integration failures reported in Table~\ref{tab:tasks}. We also checked the initial failure counts and traced the module calls made by each check against its declared dependencies.

\paragraph{Role of hidden checks.}
Agents can run the visible checks during repair. The coordinator evaluates both visible and hidden checks when accepting edits and stops a successful run when all checks pass.

\paragraph{Agent assignments and information.}
The main comparison uses \texttt{claude-haiku-4-5}, with three replicates per task, 21 task--replicate units, and 261 agent calls. The replication uses \texttt{claude-sonnet-5}, with one replicate per task, seven units, and 74 calls. Each call starts with a fresh context and a separate copy of the current program. The instructions specify the task, the assigned module, and the available tests.

A module agent can inspect the starting program but may edit only its assigned module. Agents working in parallel receive the same starting program and cannot see one another's edits during the round. Each agent can run the visible unit and integration checks involving its module and is told that evaluation will also use hidden checks of the same kinds. Only changes to the assigned module enter the proposal used by the coordinator. The single-agent baseline receives the same task specification, with permission to edit all three modules and run all visible checks.

\paragraph{Repair schedules and stopping rules.}
All protocols stop when $F=0$ or when they reach their prescribed limit. \textsf{SGL} allows up to four whole-program repair sessions, each starting from the most recently accepted program. \textsf{SEQ} allows two passes through the modules in a fixed order and skips a module when all checks involving it already pass. Both baselines accept an edit only when it reduces $F$.

The four parallel protocols allow up to three rounds. Each round assigns one agent to every module involved in a failing check, with all agents starting from the same program. The protocols share their first-round proposals, allowing a paired comparison of their merge decisions, and each protocol is charged for the calls it uses. Later rounds generate fresh proposals from each protocol's own accepted program.

\paragraph{Merge decisions.}
The parallel protocols differ in how they combine those proposals. \textsf{MA} applies their union directly, while \textsf{MAG} accepts the union only when it reduces $F$. \textsf{COL} evaluates each changed module separately against the starting program, retains strictly improving proposals, and selects the color class with the largest summed gain, as in~\eqref{eq:fractional-merge-rule}. \textsf{OPT} also evaluates the union of the retained proposals and selects it when its gain exceeds that of the color candidate~\eqref{eq:optimistic}.

\paragraph{Reusing deferred proposals.}
After a color-class merge, verified proposals from the other class are retained and evaluated against the updated program before the next agent round. This permits further progress without requesting the same edit again. Together with skipping modules whose checks pass, reuse makes the assignment of work depend on the current program. The pointwise merge guarantee applies when the proposals satisfy Assumption~\ref{ass:structure}; the convergence rate in Theorem~\ref{thm:rate} additionally requires its fixed conditional assignment distribution and uniform local-improvement bounds. The reported costs include the savings from proposal reuse as well as the cost of verification.

\paragraph{Input cost and verification charges.}
We measure input-token-equivalent cost as input tokens $+\,1.25\times$ cache-write tokens $+\,0.1\times$ cache-read tokens. This measure excludes output tokens. One budget unit is the median call cost: 109{,}679 input-token equivalents for \texttt{claude-haiku-4-5} and 75{,}441 for \texttt{claude-sonnet-5}.

To account for verification, we add $v\in\{0,1/4,1\}$ budget units for each full-objective evaluation required by a protocol. This includes the initial evaluation and evaluations of individual proposals, unions, and candidates subject to an acceptance gate. The loss of a color candidate follows from additivity~\eqref{eq:fractional-add}, so the direct evaluations used only to check that identity are excluded from protocol cost.

\paragraph{Modeled completion time.}
We reconstruct completion time from measured agent-call durations, assuming that calls in a parallel round start together. A round therefore contributes the duration of its slowest call, followed by an assumed latency $t_{\mathrm{ver}}\in\{0,60,300\}$ seconds for each sequential verification step. Individual proposals are verified in parallel and together contribute one step; a subsequent union evaluation contributes another. The durations were recorded with up to 20 concurrent agents. Varying $v$ and $t_{\mathrm{ver}}$ shows how verification cost changes the comparison. Budget-success curves include every run, while median costs to solve are computed over successful runs.

\paragraph{Experimental design and model replication.}
The tasks, agent instructions, protocols, stopping limits, cost measures, and planned analyses were fixed before the first main-experiment call. After observing nearly identical first-round behavior across the \texttt{claude-haiku-4-5} replicates, we added a \texttt{claude-sonnet-5} replication to examine whether the conflict pattern depended on the model. The tasks, protocols, and cost accounting were unchanged.

\FloatBarrier
\section{Repair outcomes and coordination costs}\label{app:results}

Across the two models, both verified merge rules solve all 11 runs with diagnosed edit conflicts, while \textsf{MA} solves none and \textsf{MAG} solves one (Table~\ref{tab:repair-costs}). The following diagnostic identifies the ownership decisions that produce these differences.

\paragraph{Conflict diagnosis and model dependence.}
We diagnose an ownership conflict by testing the two adjacent proposals while using the reference implementation for all other modules. We classify the proposals as conflicting when each passes the interface check individually with the transformation absent from the adjacent module, but applying both proposals causes the check to fail. This diagnostic identified conflicts in 9 of the 18 \texttt{claude-haiku-4-5} units on T1--T6: all replicates of T1 and T6, two of T3, and one of T5. \texttt{claude-sonnet-5} produced conflicts in 2 of its 6 units, on T2 and T6. The models differed in how they assigned ownership: \texttt{claude-haiku-4-5} double-applied the coupon, while \texttt{claude-sonnet-5} double-escaped HTML. Across both models, every outcome difference among the parallel rules arose on a diagnosed conflict. All 108 audited color merges satisfied additivity~\eqref{eq:fractional-add} exactly.

\paragraph{Resolving a repeated ownership conflict.}
The first conflicted union lowers $F$ from 21 to 5 by repairing local defects despite applying the shared transformation twice.
In the next round, both owners respond to the failing integration checks by removing their copy of the transformation.
\textsf{MA} then alternates between both fixes and neither, yielding $21,5,5,5$.
\textsf{MAG} accepts the first union but rejects the union of both removals because it does not lower $F$.
The verified rules evaluate the removals separately and select one owner, giving the \textsf{OPT} trajectory $21,5,0$.
On T3, the conflicted union leaves five failures while the color candidate leaves four.
\textsf{OPT} keeps a union when it improves more than the color candidate and can resolve any remaining interference in a later round.
Table~\ref{tab:traj} gives all conflict trajectories.

\paragraph{Resource use and matched budgets.}
On these small tasks, \textsf{SGL} solves every unit with a median of one call. For \texttt{claude-haiku-4-5}, it uses about one quarter of the input-token-equivalent cost and one third of the modeled time of the verified teams at $t_{\mathrm{ver}}=0$. \textsf{SEQ} solves 26 of 28 units but has the longest modeled completion time. The verified teams typically need an additional round and roughly twice as many objective evaluations as merge-all, which can offset their higher eventual success rate under a fixed budget.

For example, at nine median-call-equivalent units and $v=1$, \textsf{MA} solves 12 of 21 \texttt{claude-haiku-4-5} units, while each verified rule solves eight. At a 15-minute budget with $t_{\mathrm{ver}}=300$ seconds, the corresponding counts are 12 and zero. Reducing verification latency to 60 seconds changes the comparison: within ten minutes, \textsf{COL}, \textsf{OPT}, and \textsf{MA} solve 17, 13, and 12 units, respectively. \texttt{claude-sonnet-5} shows the same qualitative ordering. Table~\ref{tab:budgets} and Figures~\ref{fig:budgets} and~\ref{fig:budgets-all} report these comparisons with unsuccessful runs included.

\begin{table}[h]
\caption{Success and resource use across the six repair protocols. The first four rows count solved task--replicate units; the conflict and other-unit rows pool both models. Resource-use rows are means over all 21 Haiku units. The final two rows give median cost among successful Haiku runs, in median-call-equivalent units, with each objective evaluation charged at $v$. Input costs exclude output tokens; ``k'' denotes thousands of input-token equivalents. Haiku denotes \texttt{claude-haiku-4-5}. Sonnet denotes \texttt{claude-sonnet-5}.}
\label{tab:repair-costs}
\centering
\small
\setlength{\tabcolsep}{2pt}
\begin{tabular}{lrrrrrr}
\toprule
& \textsf{SGL} & \textsf{SEQ} & \textsf{MA} & \textsf{MAG} & \textsf{COL} & \textsf{OPT}\\
\midrule
Solved, Haiku (21) & 21 & 19 & 12 & 13 & 21 & 21\\
Solved, Sonnet (7) & 7 & 7 & 5 & 5 & 7 & 7\\
Conflict units (11) & 11 & 11 & 0 & 1 & 11 & 11\\
Other units (17) & 17 & 15 & 17 & 17 & 17 & 17\\
\midrule
Agent calls & 1.14 & 3.14 & 4.71 & 4.71 & 3.86 & 3.86\\
Evaluations & 2.14 & 4.00 & 2.86 & 2.86 & 5.76 & 6.19\\
Input equivalents (k) & 137 & 341 & 633 & 640 & 567 & 559\\
Modeled time (s), $t_{\mathrm{ver}}=0$ & 85 & 272 & 263 & 269 & 267 & 250\\
Median cost to solve, $v=0$ & 1.06 & 3.04 & 3.36 & 3.47 & 4.41 & 4.41\\
Median cost to solve, $v=1$ & 3.06 & 7.04 & 5.36 & 5.47 & 9.41 & 9.41\\
\bottomrule
\end{tabular}
\end{table}

\begin{table}[h]
\caption{Solved runs and first-round edit conflicts for each repair task. Protocol columns count successful runs out of three replicates for Haiku and one for Sonnet. ``Conf.'' counts runs with a genuine conflict between adjacent proposals, as defined in the conflict diagnostic. C0 has no ambiguous interface. Haiku denotes \texttt{claude-haiku-4-5}. Sonnet denotes \texttt{claude-sonnet-5}.}
\label{tab:pertask}
\centering
\small
\begin{tabular}{l*{7}{c}@{\hspace{1.5em}}*{7}{c}}
\toprule
&\multicolumn{7}{c}{Haiku}&\multicolumn{7}{c}{Sonnet}\\
\cmidrule(lr){2-8}\cmidrule(lr){9-15}
Task & Conf. & \textsf{SGL} & \textsf{SEQ} & \textsf{MA} & \textsf{MAG} & \textsf{COL} & \textsf{OPT} & Conf. & \textsf{SGL} & \textsf{SEQ} & \textsf{MA} & \textsf{MAG} & \textsf{COL} & \textsf{OPT}\\
\midrule
T1 & 3 & 3 & 3 & 0 & 0 & 3 & 3 & 0 & 1 & 1 & 1 & 1 & 1 & 1\\
T2 & 0 & 3 & 2 & 3 & 3 & 3 & 3 & 1 & 1 & 1 & 0 & 0 & 1 & 1\\
T3 & 2 & 3 & 3 & 1 & 1 & 3 & 3 & 0 & 1 & 1 & 1 & 1 & 1 & 1\\
T4 & 0 & 3 & 3 & 3 & 3 & 3 & 3 & 0 & 1 & 1 & 1 & 1 & 1 & 1\\
T5 & 1 & 3 & 3 & 2 & 3 & 3 & 3 & 0 & 1 & 1 & 1 & 1 & 1 & 1\\
T6 & 3 & 3 & 3 & 0 & 0 & 3 & 3 & 1 & 1 & 1 & 0 & 0 & 1 & 1\\
C0 & --- & 3 & 2 & 3 & 3 & 3 & 3 & --- & 1 & 1 & 1 & 1 & 1 & 1\\
\bottomrule
\end{tabular}
\end{table}

\begin{table}[h]
\caption{Runs solved within fixed cost or time budgets, out of 21 Haiku runs and seven Sonnet runs. ``Calls'' denotes median-call-equivalent input cost, with an additional charge of $v$ per objective evaluation, rather than a limit on the number of calls. Time budgets include $t_{\mathrm{ver}}$ for each sequential verification step. All runs contribute to the denominators. Haiku denotes \texttt{claude-haiku-4-5}. Sonnet denotes \texttt{claude-sonnet-5}.}
\label{tab:budgets}
\centering
\small
\begin{tabular}{l*{6}{c}@{\hspace{1.5em}}*{6}{c}}
\toprule
&\multicolumn{6}{c}{Haiku (of 21)}&\multicolumn{6}{c}{Sonnet (of 7)}\\
\cmidrule(lr){2-7}\cmidrule(lr){8-13}
Budget & \textsf{SGL} & \textsf{SEQ} & \textsf{MA} & \textsf{MAG} & \textsf{COL} & \textsf{OPT} & \textsf{SGL} & \textsf{SEQ} & \textsf{MA} & \textsf{MAG} & \textsf{COL} & \textsf{OPT}\\
\midrule
3 calls, $v=0$ & 21 & 8 & 3 & 3 & 3 & 3 & 7 & 5 & 3 & 3 & 3 & 3\\
3 calls, $v=1$ & 9 & 0 & 0 & 0 & 0 & 0 & 3 & 0 & 0 & 0 & 0 & 0\\
6 calls, $v=\nicefrac14$ & 21 & 19 & 12 & 12 & 12 & 12 & 7 & 7 & 5 & 5 & 5 & 5\\
9 calls, $v=1$ & 21 & 19 & 12 & 12 & 8 & 8 & 7 & 7 & 5 & 5 & 4 & 4\\
5 min, $t_{\mathrm{ver}}=60$ s & 18 & 0 & 11 & 11 & 3 & 3 & 7 & 0 & 5 & 5 & 5 & 5\\
10 min, $t_{\mathrm{ver}}=60$ s & 21 & 17 & 12 & 12 & 17 & 13 & 7 & 7 & 5 & 5 & 7 & 7\\
15 min, $t_{\mathrm{ver}}=300$ s & 18 & 0 & 12 & 12 & 0 & 0 & 7 & 0 & 5 & 5 & 0 & 0\\
\bottomrule
\end{tabular}
\end{table}

\begin{figure}[h]
\centering
\includegraphics[width=\textwidth]{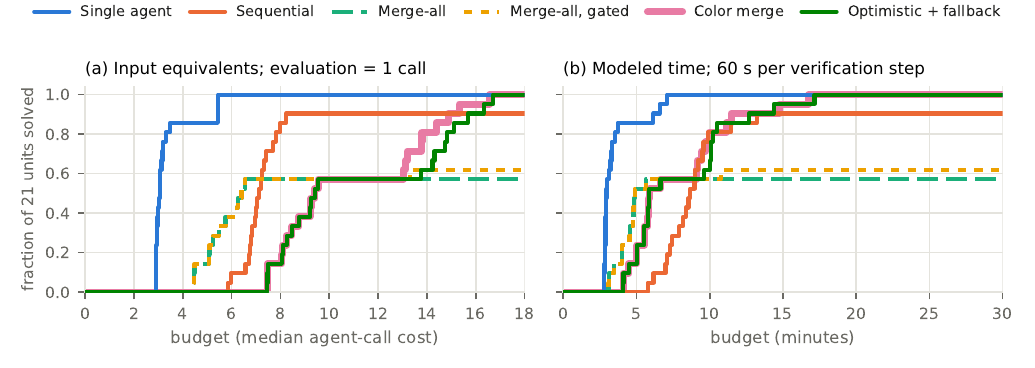}
\caption{Fraction of the 21 Haiku runs solved as the available budget increases. (a) Input cost, measured in median-call equivalents and excluding output tokens, with one additional unit per objective evaluation ($v=1$). (b) Modeled completion time, combining the measured agent critical path with 60 seconds per sequential verification step. Unsuccessful runs remain in the denominator. Figure~\ref{fig:budgets-all} varies the verification charges. Haiku denotes \texttt{claude-haiku-4-5}.}
\Description{Success rates of the six repair protocols as the input-token-equivalent and modeled completion-time budgets increase.}
\label{fig:budgets}
\end{figure}

\begin{figure}[h]
\centering
\includegraphics[width=\textwidth]{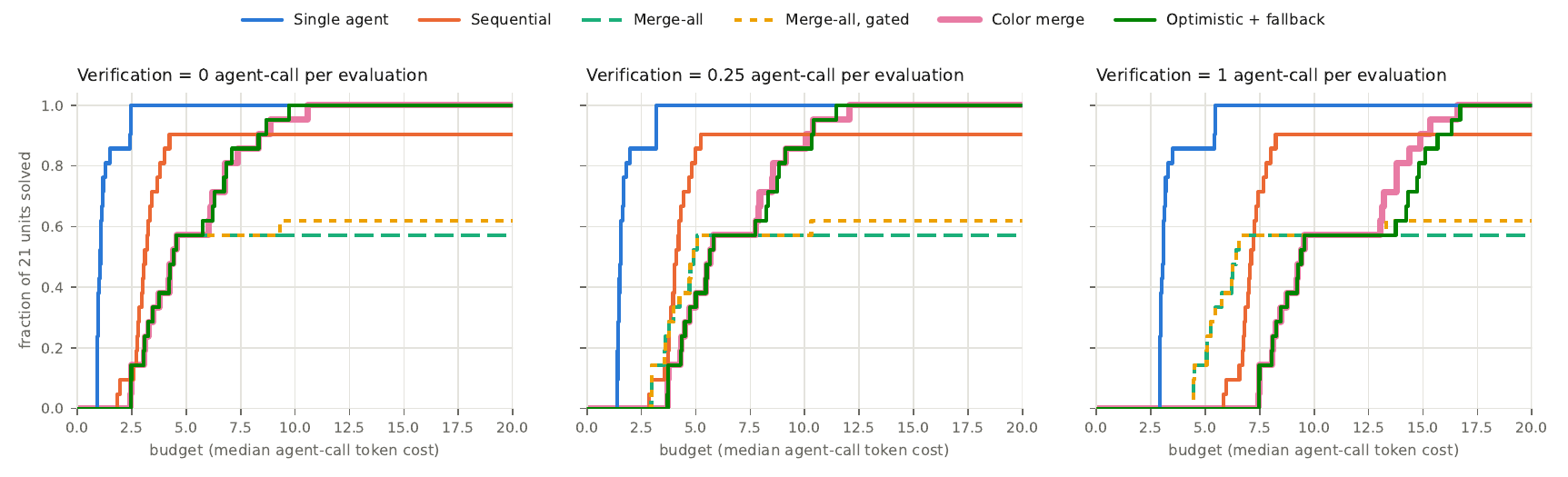}\\[2mm]
\includegraphics[width=\textwidth]{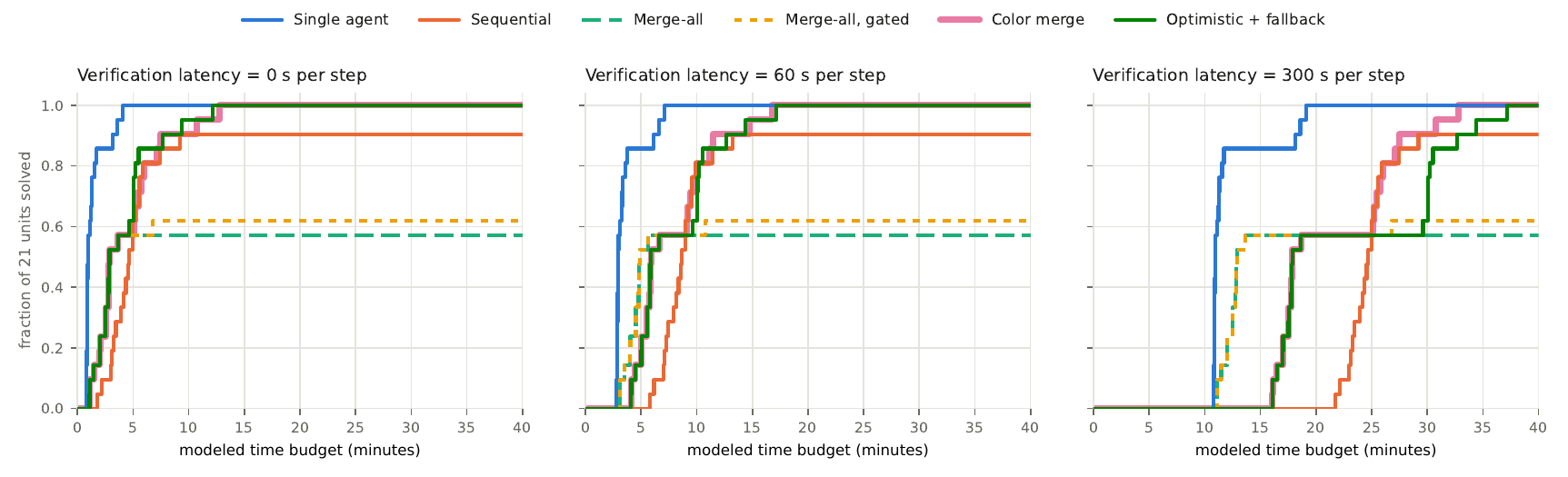}
\caption{Effect of verification charges on the fraction of Haiku runs solved within a budget. The top row varies the price of one objective evaluation over $v=0,\nicefrac14,1$ median-call equivalents. The bottom row varies sequential verification latency over $t_{\mathrm{ver}}=0,60,300$ seconds. Every curve uses all 21 runs, including failures; input costs exclude output tokens. Haiku denotes \texttt{claude-haiku-4-5}.}
\Description{Success-rate curves for six repair protocols under three token-accounting prices and three verification-time assumptions.}
\label{fig:budgets-all}
\end{figure}

\begin{table}[h]
\caption{Success and resource use in the seven-run Sonnet replication. ``Solved'' counts successful runs. Agent calls, objective evaluations, input-token equivalents (thousands), and modeled time are means over all seven runs. The final two rows report median cost among successful runs, in median-call-equivalent units, charging $v$ per objective evaluation as in Table~\ref{tab:repair-costs}. Sonnet denotes \texttt{claude-sonnet-5}.}
\label{tab:sonnet}
\centering
\small
\begin{tabular}{lrrrrrr}
\toprule
& \textsf{SGL} & \textsf{SEQ} & \textsf{MA} & \textsf{MAG} & \textsf{COL} & \textsf{OPT}\\
\midrule
Solved & 7 & 7 & 5 & 5 & 7 & 7\\
Agent calls & 1.00 & 3.14 & 4.14 & 4.14 & 3.57 & 3.57\\
Evaluations & 2.00 & 4.14 & 2.57 & 2.57 & 5.57 & 5.86\\
Input eq. (k) & 78 & 216 & 312 & 323 & 277 & 279\\
Modeled time (s), $t_{\mathrm{ver}}=0$ & 38 & 99 & 69 & 74 & 73 & 65\\
Cost to solve, $v=0$ & 1.05 & 2.82 & 2.89 & 2.89 & 3.14 & 3.14\\
Cost to solve, $v=1$ & 3.05 & 6.82 & 4.89 & 4.89 & 8.14 & 8.14\\
\bottomrule
\end{tabular}
\end{table}

\begin{table}[h]
\caption{Failing checks after successive merges on the nine Haiku runs with edit conflicts. Each sequence begins with the initial loss $F(x^0)$ and ends at success or the round cap. Runs grouped in the same row have identical trajectories. Haiku denotes \texttt{claude-haiku-4-5}.}
\label{tab:traj}
\centering
\small
\begin{tabular}{lllll}
\toprule
Task and replicates & \textsf{MA} & \textsf{MAG} & \textsf{COL} & \textsf{OPT}\\
\midrule
T1, replicates 1--3 & 21, 5, 5, 5 & 21, 5, 5, 5 & 21, 6, 5, 0 & 21, 5, 0\\
T3, replicates 2--3 & 26, 5, 6, 5 & 26, 5, 5, 5 & 26, 4, 4, 0 & 26, 4, 4, 0\\
T5, replicate 3 & 25, 5, 5, 5 & 25, 5, 5, 0 & 25, 14, 5, 0 & 25, 5, 0\\
T6, replicates 1--3 & 26, 6, 6, 6 & 26, 6, 6, 6 & 26, 14, 5, 0 & 26, 6, 0\\
\bottomrule
\end{tabular}
\end{table}

\paragraph{First merges and the interaction surplus.}
When ownership is consistent, repairs in adjacent modules can reinforce each other by making their shared integration checks pass. The first union solved all 12 conflict-free \texttt{claude-haiku-4-5} units and all 5 conflict-free \texttt{claude-sonnet-5} units. Merge-all and \textsf{OPT} therefore finished in one round, while the color candidate left a mean of 12.3 and 11.6 failing checks, respectively. \textsf{COL} needed a further round. On the 9 \texttt{claude-haiku-4-5} units with conflicts, the first merge left a mean of 5.3 failures for merge-all, 9.1 for \textsf{COL}, and 5.1 for \textsf{OPT}.

Across both experiments, \textsf{OPT} evaluated 37 unions, accepted 26, and selected the color candidate 11 times. Two fallbacks occurred in the first round on T3 conflict units; the others occurred later, when the union offered no improvement over a single color class. The trajectories in Table~\ref{tab:traj} show how verified merging resolves these conflicts, while merge-all retains 5--6 failures through the round cap. \textsf{MAG} escaped only on T5 replicate 3: after its second-round union was rejected, fresh agents in the third round removed only one copy of the transformation.

\paragraph{Sequential repair.}
Conflicts can also arise when edits are applied sequentially. In the \texttt{claude-sonnet-5} run on T4, the records agent added quoting in the first pass, and the csvrow agent then quoted the fields again. The acceptance gate kept the second edit because its local repairs reduced $F$ despite the double quoting. In the second pass, the records agent removed its quoting and solved the task. The two \texttt{claude-haiku-4-5} failures, T2 replicate 1 and C0 replicate 1, reached the pass cap with one hidden check still failing: name trimming and total precision, respectively.

\paragraph{Replicate dependence and task-level comparisons.}
The seven tasks and three modules define 21 distinct (task, module) groups. Pass/fail behavior varied across replicates only for T3 timeparse and T5 slugs, which account for the reported differences in conflict incidence. This strong dependence makes the task the more informative unit of comparison. Verified merging outperforms merge-all on T1, T3, T5, and T6 and ties on the other tasks. The four wins and no losses give a two-sided sign-test $p$-value of $2\cdot2^{-4}=0.125$; treating the 9 conflict units as independent would instead give $p\approx0.004$. The \texttt{claude-sonnet-5} replication changes which tasks produce conflicts while preserving the observed mechanism and ordering of the rules.

\FloatBarrier
\section{Scaling to larger repair tasks}\label{app:scale}
The three-module benchmark contains programs that a single agent can repair readily. The scaling benchmark tests when concurrent repair offsets coordination overhead. Each task contains $m$ modules with five bugs per module. Function contracts specify ownership explicitly, and agents are instructed to rely on the contracts of other modules. Checks involve at most one base module and one composite module, yielding a nonempty bipartite dependency graph with $\chi_f(G)=\chi(G)=2$. Each bug causes a visible check to fail both individually and alongside the other bugs in its module. Agents perform their assigned work without delegation.

\begin{table}[h]
\caption{Size and composition of the twelve scaling tasks. Each size $m$ has three instances with different bug seeds and five bugs per module. Clusters A--H cover commerce, logistics, scheduling, analytics, inventory, geometry, identifiers, and text analytics; each contains two base and two composite modules. Instance names encode the number of modules; suffixes a--c distinguish the three instances at each size. Multiple entries in a cell follow the listed instance order. $F(x^0)$ counts initially failing checks.}
\label{tab:scale-tasks}
\centering
\small
\begin{tabular}{llrrrr}
\toprule
Instances & Clusters & $m$ & Bugs & Checks (visible/hidden) & $F(x^0)$\\
\midrule
S4a, S4b, S4c & A; C; D & 4 & 20 & 36/36, 37/37, 36/36 & 47, 45, 48\\
S8a, S8b, S8c & A+B; C+D; B+D & 8 & 40 & 72/72, 73/73, 72/72 & 87, 105, 95\\
S16a, S16b, S16c & A--D & 16 & 80 & 145/145 & 206, 190, 182\\
L32a, L32b, L32c & A--H & 32 & 160 & 289/289 & 386, 383, 363\\
\bottomrule
\end{tabular}
\end{table}

\paragraph{Task construction.}
The benchmark uses 32 Python modules grouped into the eight domains in Table~\ref{tab:scale-tasks}. Each function has a documented contract and visible and hidden input--output checks generated from the reference implementation. A composite function calls at most one base module, so a check's declared dependencies consist of its own module and the base module it calls. Tasks with $m=4,8,16,32$ contain 72--74, 144--146, 290, and 578 checks, respectively.

We introduce bugs by changing operators or constants in the syntax tree while preserving the surrounding source: swapping relational or arithmetic operators, exchanging logical conjunction and disjunction, changing an integer by $\pm1$, exchanging minimum and maximum, or removing a negation. Each instance uses a separate bug seed. We retain a mutation only when the affected function's visible check fails both with that mutation alone and with the module's other mutations, and the program still terminates. Before running agents, we verified the initial failure counts, checked that the reference implementation passes every test, and traced the dependencies exercised by the checks. The base and composite modules form a proper two-coloring, and reference repairs have additive gains within each color class.

\paragraph{Agent instructions and execution.}
All 278 calls use \texttt{claude-haiku-4-5}: 12 whole-program calls, 84 sequential module calls, and 182 parallel module calls. Agents receive separate workspaces and fresh contexts as in Appendix~\ref{app:bench}. Module agents are instructed to repair their own functions according to the specification and assume that other modules satisfy their documented contracts, even when an integration check currently fails. The instructions therefore direct agents to repair their assigned module rather than compensate for defects in neighboring modules. The single agent can choose which modules to test.

The recorded calls used no additional agents and remained within their assigned workspaces and editable modules. We saved the single agent's program every five seconds to measure repair progress.

\paragraph{Protocols and accounting.}
The protocols, iteration limits, input-cost measure, and time model follow Appendix~\ref{app:bench}. We omitted \textsf{SEQ} at $m=32$ because each pass requires 32 sequential calls and it was already the slowest protocol at $m\le16$. Parallel rules share their first-round calls. In later rounds, they also share a call when they request the same module from an identical artifact. Each rule is charged for every call it uses. The median call cost was 123{,}539 input-token equivalents.

Calls were launched in batches of at most 20, while the time model assumes simultaneous starts within a round. To assess whether the worker limit would change the modeled duration, we scheduled the recorded calls on 20 workers in descending order of duration. Every round still finished within the duration of its slowest call. Thus the recorded calls fit within the assumed round duration on 20 workers. The modeled times omit queueing and orchestration delays.

\begin{table}[h]
\caption{Time and input cost to solve tasks with $m$ modules, averaged geometrically over three instances at each size. The first two column groups give modeled minutes at verification latencies of 0 and 60 seconds per sequential step. The final group gives millions of input-token equivalents, excluding output tokens and verification charges ($v=0$). \textsf{MAG} matches \textsf{MA}; at zero verification latency, \textsf{COL} matches \textsf{OPT} because its extra verification step needs no agent call. Dashes mark the omitted 32-module \textsf{SEQ} runs.}
\label{tab:scale-full}
\centering
\small
\begin{tabular}{r*{5}{c}@{\hspace{1.2em}}*{5}{c}@{\hspace{1.2em}}ccc}
\toprule
&\multicolumn{5}{c}{Minutes, $t_{\mathrm{ver}}=0$}&\multicolumn{5}{c}{Minutes, $t_{\mathrm{ver}}=60$ s}&\multicolumn{3}{c}{Input equivalents (M)}\\
\cmidrule(lr){2-6}\cmidrule(lr){7-11}\cmidrule(lr){12-14}
$m$ & \textsf{SGL} & \textsf{SEQ} & \textsf{MA} & \textsf{COL} & \textsf{OPT} & \textsf{SGL} & \textsf{SEQ} & \textsf{MA} & \textsf{COL} & \textsf{OPT} & \textsf{SGL} & \textsf{SEQ} & \textsf{OPT}\\
\midrule
4 & 2.4 & 4.8 & 3.5 & 3.5 & 3.5 & 4.4 & 9.9 & 5.6 & 6.6 & 6.6 & 0.25 & 0.47 & 0.56\\
8 & 4.2 & 8.6 & 2.9 & 2.9 & 2.9 & 6.2 & 17.6 & 5.0 & 6.0 & 6.0 & 0.49 & 0.86 & 1.09\\
16 & 9.2 & 19.3 & 4.5 & 4.5 & 4.5 & 11.2 & 36.3 & 6.8 & 7.8 & 7.8 & 1.19 & 1.80 & 2.30\\
32 & 15.3 & --- & 4.1 & 4.1 & 4.1 & 17.3 & --- & 6.1 & 7.1 & 7.1 & 2.08 & --- & 4.22\\
\bottomrule
\end{tabular}
\end{table}

\begin{figure}[h]
\centering
\includegraphics[width=\textwidth]{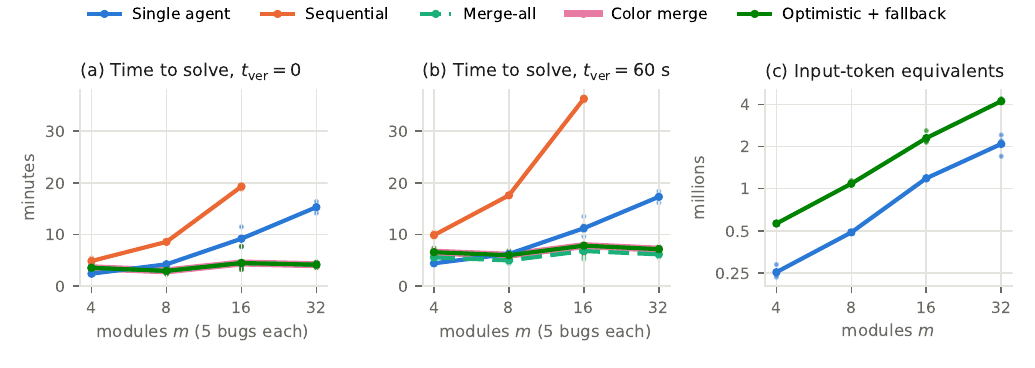}
\caption{Time and input cost to solve increasingly large repair tasks. Panels (a) and (b) show modeled completion time with 0 and 60 seconds per sequential verification step, respectively; panel (c) shows input-token equivalents, excluding output tokens. Lines connect geometric means over three instances per size, and dots show the individual instances. In panel (a), \textsf{MA}, \textsf{COL}, and \textsf{OPT} have equal modeled times.}
\Description{Modeled completion time and input-token-equivalent cost of the repair protocols as the number of modules increases.}
\label{fig:scale}
\end{figure}

\paragraph{Outcomes and first merges.}
Every protocol passed all visible and hidden checks on every instance it ran. \textsf{SGL} completed each task in one session, and \textsf{SEQ} completed each in one pass. \textsf{MA}, \textsf{MAG}, and \textsf{OPT} usually finished in one round. Their exception was the third 16-module instance (S16c in Table~\ref{tab:scale-tasks}), with trajectory $182,2,0$: the invoice agent left one function unrepaired, and the same two checks failed when its proposal was tested alone against the reference. \textsf{COL} completed repair after re-verifying its deferred class, as in the trajectory $386,153,0$.

All $m$ first-round proposals improved $F$ individually. Their union outperformed the color candidate on all 12 instances, leaving zero failures except for the two on that 16-module instance, compared with 19--156 failures for the color candidate. The positive interaction surplus~\eqref{eq:surplus} ranged from 5 to 36 checks: a check involving both a faulty base function and a faulty composite function passes only after both are repaired. Additivity~\eqref{eq:fractional-add} held in all 38 audited color merges, and \textsf{OPT} always selected the union over the verified color candidate.

\paragraph{Sources of the modeled time advantage.}
The timing results in Figure~\ref{fig:scale} reflect a balance between the single agent's efficiency and the duration of the slowest parallel call. The single agent spent 26.5--43.0 seconds per module, with a geometric mean of 28.6 seconds at $m=32$, and ran tests 3--13 times per session. Parallel calls repaired only five bugs each but took a median of 97 seconds, used 16 tools, and consumed 128k input-token equivalents. Median call duration varied from 80 to 152 seconds across instances. Within a round, the slowest call took 1.45--3.5 times the median, setting the modeled agent time at 140--280 seconds.

As the number of modules grows, the single agent's roughly linear repair time eventually exceeds this parallel-round duration. At $t_{\mathrm{ver}}=0$, the crossover occurs between $m=4$ and $m=8$; at $t_{\mathrm{ver}}=60$ seconds, the speedup at $m=8$ is only 1.04. Verification also separates the parallel rules: \textsf{OPT} evaluates individual proposals and then their union, while \textsf{MA} evaluates only the union. Consequently, \textsf{MA} has a shorter modeled completion time when verification latency is positive.

The time saving comes with greater input cost. The single agent's peak context averaged 73k, 94k, 115k, and 122k tokens at $m=4,8,16,32$, compared with a median peak of 64k for parallel calls. Cache reuse keeps the single agent's input-token-equivalent cost approximately linear over these sizes, whereas each module agent must read the task and program in a fresh context. The team therefore costs about twice as much at every tested size. Scaling beyond these instances will depend on the available concurrency and the growth of the maximum call duration.

\begin{figure}[h]
\centering
\includegraphics[width=\textwidth]{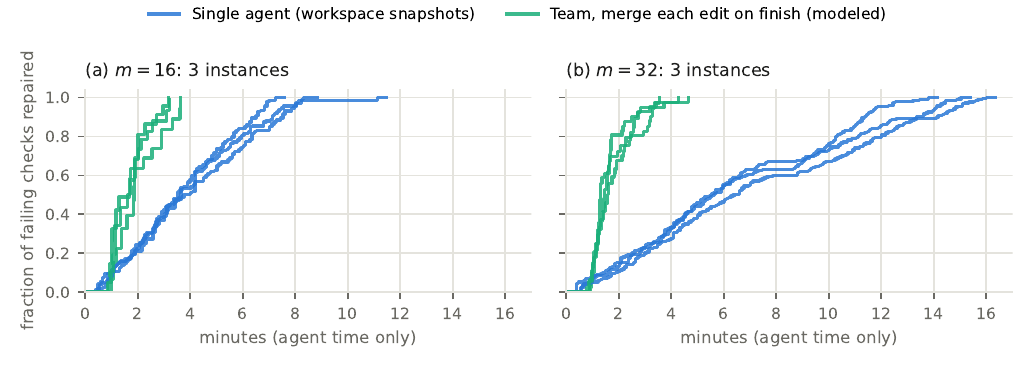}
\caption{Fraction of initially failing checks repaired over time for tasks with (a) 16 and (b) 32 modules. Single-agent curves evaluate workspace snapshots taken every five seconds on visible and hidden checks. Team curves model simultaneous first-round starts and immediate merging as each call finishes, with no verification delay. Each curve represents one instance; the team curves provide an optimistic merge-on-finish reference for comparison with the single-agent runs.}
\Description{Anytime repair trajectories for a single agent and a parallel team on tasks with 16 and 32 modules.}
\label{fig:scale-anytime}
\end{figure}

\paragraph{Anytime progress.}
Figure~\ref{fig:scale-anytime} and Table~\ref{tab:scale-anytime} show how quickly repairs accumulate. At $m=32$, the modeled merge-on-finish team reaches 90\% repaired 4.61$\times$ sooner than the single agent, while the slowest call reduces the completion ratio to 3.61$\times$. These curves provide an optimistic reference by merging edits as they arrive and omitting verification. The executed \textsf{OPT} protocol waits for the round's proposals and verifies them before merging, so its intermediate progress follows a different schedule.

\begin{table}[h]
\caption{Relative time to reach 50\%, 90\%, and 100\% repair under the merge-on-finish model. Entries are geometric means over three instances of single-agent time divided by modeled team time; values above one favor the team. Repair fractions refer to initially failing checks. The completion ratio at $m=16$ is omitted because one first-round union left failures.}
\label{tab:scale-anytime}
\centering
\small
\begin{tabular}{rccc}
\toprule
$m$ & 50\% & 90\% & 100\%\\
\midrule
4 & 0.79 & 0.53 & 0.57\\
8 & 1.38 & 1.26 & 1.32\\
16 & 2.13 & 2.26 & ---\\
32 & 4.06 & 4.61 & 3.61\\
\bottomrule
\end{tabular}
\end{table}

\paragraph{Sensitivity to slow tests.}
Test runs here take seconds. To see how slow tests would shift the comparison, we added a modeled cost to every recorded test-runner invocation and kept each agent's measured behavior fixed (Table~\ref{tab:scale-slow}). A module agent runs only the 9--18 visible checks that involve its module. The single agent ran all visible checks on every invocation in 10 of 12 instances. A cost per executed check therefore favors the team: at $m=32$ and one second per check, \textsf{OPT} would be 7.41$\times$ faster at $t_{\mathrm{ver}}=0$ and 4.92$\times$ at $t_{\mathrm{ver}}=60$ seconds. A fixed cost per invocation favors the single agent, which invoked the runner about as often as each module agent. The projected ratios reflect both parallel repair and selective testing, with the recorded testing behavior held fixed.

\begin{table}[h]
\caption{Projected speedup of \textsf{OPT} over \textsf{SGL} when tests take longer. Entries are geometric means of single-agent time divided by team time, adding the stated latency per visible check or test-runner invocation while holding recorded agent behavior fixed. Columns vary task size and verification latency. The ``None'' row removes measured test time, explaining the small difference from Table~\ref{tab:scale}.}
\label{tab:scale-slow}
\centering
\small
\begin{tabular}{lcccc}
\toprule
&\multicolumn{2}{c}{$m=16$}&\multicolumn{2}{c}{$m=32$}\\
\cmidrule(lr){2-3}\cmidrule(lr){4-5}
Added test cost & $t_{\mathrm{ver}}=0$ & $t_{\mathrm{ver}}=60$ s & $t_{\mathrm{ver}}=0$ & $t_{\mathrm{ver}}=60$ s\\
\midrule
None & 2.09 & 1.44 & 3.73 & 2.42\\
0.1 s per check & 2.26 & 1.55 & 4.24 & 2.73\\
0.5 s per check & 2.91 & 1.95 & 5.86 & 3.79\\
1 s per check & 3.60 & 2.42 & 7.41 & 4.92\\
10 s per invocation & 1.86 & 1.37 & 3.35 & 2.33\\
30 s per invocation & 1.50 & 1.22 & 2.80 & 2.14\\
60 s per invocation & 1.18 & 1.05 & 2.26 & 1.88\\
\bottomrule
\end{tabular}
\end{table}

\FloatBarrier
\section{Comparing ordinary and fractional coloring}\label{app:fractional}

\paragraph{Units and proposal generation.}
The preceding benchmarks have bipartite dependency graphs, for which ordinary and fractional coloring both give $\chi_f(G)=\chi(G)=2$. To separate their guarantees, we use the five-cycle $C_5$, where $\chi(C_5)=3$ and $\chi_f(C_5)=5/2$. The benchmark contains twelve explicit-contract units spanning four domains and three ordinary-coloring rotations fixed before the runs. Five fresh \texttt{gpt-6-luna} module agents generate proposals for each unit in a single run, and both coloring rules receive the same first-round proposals.

\paragraph{Coloring rules and retained gain.}
Let $S$ be the sum of positive verified singleton gains. \textsf{COL3} selects the best class of the fixed ordinary coloring, guaranteeing a gain of at least $S/3$. \textsf{FRAC5} evaluates the five independent pairs of $C_5$, raising the guarantee to $2S/5$. These pairs are all the maximal independent sets of $C_5$, so with nonnegative gains the best pair is also a maximum-weight independent set.

On the paired proposals, \textsf{FRAC5} matched or exceeded \textsf{COL3} in every unit and improved on it in four of the twelve, raising the mean retained fraction from 0.5615 to 0.6167. Every additivity audit passed. Figure~\ref{fig:graph-diagnostic} shows these paired fractions alongside the directly evaluated full-union gains. Using the same proposals for both rules isolates the effect of the coloring choice.

\FloatBarrier
\section{Held-out interface migrations}\label{app:interface}

\paragraph{Task construction.}
The benchmark contains sixteen held-out interface-migration designs, eight with five modules on $C_5$ and eight with ten modules on a prism graph.
The designs vary whether interface ownership is shared or explicit and whether the initial program contains an additional local bug.
The required transformations are non-idempotent: applying a transformation twice can introduce a failure even when either application alone is correct.
Each design has three replicates, giving 48 task--replicate units grouped into sixteen designs.
The objective counts failed visible and hidden checks, and we audit each check's module dependencies against the declared graph.

\paragraph{Agent instructions and repair protocol.}
All methods use the same version of \texttt{Qwen2.5-Coder-32B-Instruct}.
Module agents receive the specification, current program, and visible-check feedback, with edits restricted to their assigned module.
The coordinator uses both visible and hidden checks to evaluate proposals.
We compare \textsf{MA}, \textsf{MAG}, \textsf{COL}, and \textsf{OPT} with safe fractional integration (\textsf{FRAC}) and its optimistic variant (\textsf{FOPT}).
\textsf{FRAC} selects the safe fractional-coloring candidate; \textsf{FOPT} also evaluates the full union and retains the better candidate.
All six methods start from the same faulty program and reuse the same first-round proposals.
Each method then requests proposals from its own accepted program, stopping at zero loss or after three rounds in total.
This gives paired first-round comparisons and allows later proposals to adapt to each method's repair trajectory.

\paragraph{Repair outcomes.}
Table~\ref{tab:interface-heldout} reports the final losses.
\textsf{FOPT} leaves fewer failed checks than \textsf{MA} and \textsf{MAG} on all 48 units.
The first-round fractional candidate improves on the full union in 47 units; \textsf{FOPT} selects the union in the remaining unit.
All safe-candidate additivity audits pass.
None of the six parallel methods reaches zero loss within three rounds.

The fractional candidate retains more gain than the ordinary-color candidate on 30 of the 48 shared first-round proposal sets.
After later rounds, \textsf{FRAC} finishes with lower loss than \textsf{COL} in ten units, equal loss in 33, and higher loss in five.
\textsf{FOPT} and \textsf{FRAC} finish with equal loss in 47 units, while \textsf{FRAC} achieves lower loss in one.
Because later proposals are generated from different accepted programs, these final outcomes depend on the full repair trajectory.

\begin{table}[h]
\caption{Mean final failed-check counts on the held-out interface-migration benchmark. Results average over 48 task--replicate units overall and 24 per graph family; lower is better. Parallel rules run for at most three rounds. The multi-file control has a token cap of twice the corresponding \textsf{FOPT} run.}
\label{tab:interface-heldout}
\centering
\small
\begin{tabular}{lrrr}
\toprule
Method & Overall & $C_5$ & Prism \\
\midrule
Initial & 34.38 & 19.50 & 49.25 \\
\textsf{SGL} & 12.27 & 5.92 & 18.63 \\
\textsf{SGL-MF} ($\leq 2\times$) & 20.52 & 6.25 & 34.79 \\
\textsf{MA} & 20.02 & 10.25 & 29.79 \\
\textsf{MAG} & 19.58 & 10.25 & 28.92 \\
\textsf{COL} & 6.19 & 2.50 & 9.88 \\
\textsf{FRAC} & 6.17 & 2.50 & 9.83 \\
\textsf{OPT} & 7.92 & 4.50 & 11.33 \\
\textsf{FOPT} & 6.29 & 2.50 & 10.08 \\
\bottomrule
\end{tabular}
\end{table}

\paragraph{Single-agent controls.}
Both controls use the same model, designs, and replicate assignments as the parallel methods.
Each call starts with fresh context and receives the full current program and its aggregate loss.
The one-module control, \textsf{SGL}, chooses one module to edit per call and retains the edit only when the loss decreases.
It is allowed up to $3m$ calls, where $m$ is the number of modules: 15 calls for $C_5$ and 30 for the prism graph.
Its mean final loss is 12.27, and it solves no unit.
The same mean is reached when each run is truncated to the number of calls charged to \textsf{FOPT}.

The multi-file control, \textsf{SGL-MF}, may return complete source for any subset of the named modules in each call.
It also retains only improving edits and has a $3m$ call limit, together with a token cap of twice the corresponding \textsf{FOPT} run.
At the checkpoint matching \textsf{FOPT}'s token use, its mean loss is 24.17 and exceeds \textsf{FOPT}'s loss in all 48 units.
At twice that token use, its mean loss is 20.52: it finishes below \textsf{FOPT} in four units, ties in three, and finishes above it in 41.
It fully solves one unit.
Across these runs, 130 calls produce outputs that cannot be parsed, limiting the control's ability to apply its proposed repairs.

\paragraph{Resource accounting.}
\textsf{MA} uses a mean of 22.50 agent calls and 3.50 objective evaluations per unit, compared with 16.77 calls and 21.00 evaluations for \textsf{FOPT}.
Each method is charged for the calls it uses, including calls shared with other methods.
\textsf{SGL-MF} uses a mean of 96.62k total tokens, compared with 50.75k for \textsf{FOPT}.
Matching call counts or token use leaves differences in context, evaluation cost, and edit permissions; the parsing failures further affect the multi-file comparison.
The controls and parallel methods were executed separately, and shared calls prevent the combined parallel execution time from measuring each method's latency.
We therefore compare repair quality and charged resources on this benchmark.

\FloatBarrier
\section{Exploratory real-repository feasibility check}\label{app:external}

We evaluate module-restricted repair on two historical multi-fault versions of Python repositories from BugsInPy-MF~\cite{callaghan2025mining}: HTTPie version 2 and TheFuck version 2.
The cases and public tests were selected before agent inference.
Each case has three unseeded repeats, which are not three independent repository tasks.
The model is \texttt{gpt-6-sol} through a tool-using CLI, rather than the \texttt{Qwen2.5-Coder-32B-Instruct} model used in the held-out interface benchmark.
Agents can inspect source, edit allowed files, and run the published public tests.
Agents receive neither reference patches nor hidden tests.

In the module-restricted condition, each invocation may edit one source module. Proposals from the same graph-independent block are generated against the same accepted source, and their union is kept only if it strictly improves the selected fail-to-pass tests while preserving the selected pass-to-pass tests.
The global agent may edit any of the allowed source modules using the same model and public-test feedback. Its invocation limit equals the observed module-agent count for that repeat.
It serves as a cost reference when one agent may edit all allowed modules.
The graph is inferred from static imports, not certified against semantic interactions.
Both graphs satisfy $\chi_f(G)=\chi(G)=2$; graph blocks were run sequentially, so neither a fractional-coloring nor an observed parallel-time advantage is tested here.

\begin{table}[h]
\caption{Selected public-test outcomes on the real-repository check. Final counts hold separately for all three repeats of each repository in both the module-restricted and global conditions. F2P and P2P denote fail-to-pass and pass-to-pass tests.}
\label{tab:external-repair}
\centering
\small
\begin{tabular}{lccc}
\toprule
Repository & Base F2P & Final F2P & Final P2P \\
\midrule
HTTPie & $0/12$ & $12/12$ & $2/2$ \\
TheFuck & $2/4$ & $4/4$ & $2/2$ \\
\bottomrule
\end{tabular}
\end{table}

The related public test files also passed in every final snapshot (HTTPie $26/26$, TheFuck $34/34$), but these files are not whole-repository test suites.
Across the six paired repeats, the module-restricted condition used 12 agent invocations, 799.8 seconds of summed agent wall time, 2,491,797 input tokens (278,933 uncached), 29,604 output tokens, and 105 agent-initiated test commands.
The global condition used six invocations, 456.4 seconds, 1,470,456 input tokens (148,472 uncached), 17,759 output tokens, and 54 test commands.
These runs demonstrate that module-restricted agents can pass the selected public tests in two real repositories.

\FloatBarrier
\section{External exact-factor road-network merger}\label{app:road-merger}

\paragraph{Road networks and selected regions.}
The public SNAP roadNet-PA, roadNet-CA, and roadNet-TX networks~\cite{snap-roadnet} represent undirected road connections among intersections and endpoints. We evaluate proposal integration in 30 held-out regions, ten per network. Each region contains five connected blocks of 32 vertices, with the interblock roads forming a $C_5$ dependency graph. Three initializations per region give 90 paired comparisons.

\paragraph{Objective, agents, and common incumbent.}
Each road edge touching a selected block is one binary cut factor; edges outside the selected vertices contribute only constants. Each of five separate agents controls one block and attempts 30,000 within-block one-for-zero label swaps, preserving balance. A proposal is evaluated against a common incumbent; a nonpositive, invalid, or failed singleton is retained as null. After a development-only diagnostic, we froze a two-stage design before inspecting held-out outcomes. Stage one applies a guarded full union from the initial labels to create the same improved incumbent for every rule. This shared step gained 8,395 cut edges over the 90 units and is charged to every method. Stage two runs five \emph{new} agents from that incumbent.

\paragraph{Same-pool integration rules.}
Let $g_i\geq0$ be the verified singleton reduction in cut edges for block $i$, and let $S=\sum_i g_i$. The block conflict graph has an edge precisely when a road joins the blocks, so independent sets have no shared cut factors and their gains add exactly. \textsf{COL} selects the best pre-fixed class among $\{0,2\}$, $\{1,3\}$, and $\{4\}$. \textsf{FRAC} selects the best of the five nonadjacent pairs supporting the optimal $C_5$ fractional cover, each assigned weight $1/2$. \textsf{MA} applies all nonnull edits without a guard; \textsf{MAG} applies their union only if it improves the objective; \textsf{FOPT} takes the better of the fractional candidate and full union. The \emph{best singleton} rule selects the highest-gain proposal. \emph{Adaptive ordinary coloring} selects a coloring containing the best independent set after observing gains. These provide two further controls. Every rule scores the same stage-two proposal pool.

\paragraph{Retained gain.}
Table~\ref{tab:road-merger} gives per-network gain sums. Across all 90 units, $S=1{,}617$, \textsf{FRAC} retains 922 (57.0\%) and fixed \textsf{COL} 799 (49.4\%); the relative difference is 15.4\%. The best singleton retains 629, \textsf{MAG} 242, and \textsf{FOPT} 926. \textsf{FRAC} exceeds \textsf{MAG} in 84 units, loses in three, and ties in three. Unguarded \textsf{MA} worsens the incumbent in 31 units. Adaptive ordinary coloring equals \textsf{FRAC} in \emph{every} unit, as it can place the highest-gain independent pair in a proper three-coloring after observing gains.

\paragraph{Measured resources and interpretation.}
Both stages together used 900 process-level agent calls, 66.19 summed agent CPU-seconds, 14.15 seconds summed five-agent critical-path wall time, and 3,780 direct objective evaluations including runner audit checks. The complete local run, including graph loading and serialization, took 25.99 seconds on macOS arm64. Fixed \textsf{COL} scored three classes and \textsf{FRAC} five pairs from cached singleton gains; the latter required two extra score checks per unit but no additional agents or direct objective calls.

On these road networks, fractional integration retains more verified gain than ordinary coloring fixed before proposals, while adaptive ordinary coloring matches it. The three initializations within each region are repeated measurements of that region.

\FloatBarrier
\section{Exploratory fractional route-search schedule}\label{app:routing-schedule}

\paragraph{Scheduling question.}
When local-search agents share vehicle routes, exclusive route access limits which searches can run concurrently. We compare the duration of ordinary and fractional \emph{preemptible execution schedules} under these conflicts. Both schedules use five agents solving a capacitated vehicle-routing problem (CVRP), with the same search budget for every agent.

\paragraph{Instances, graph, and search.}
We use four public CVRPLIB/DIMACS routing inputs~\cite{cvrplib-dimacs}: Loggi-n401-k23 and Loggi-n501-k24, and ORTEC-n242-k12 and ORTEC-n323-k21. Every schedule for an input starts from the same deterministic feasible Clarke--Wright incumbent. A fixed route-proximity rule selects a five-route cycle, and agent $i$ searches two consecutive routes of that cycle. Adjacent agents share a route and cannot hold their route-pair locks concurrently; nonadjacent agents have disjoint locks. This rule deliberately induces $C_5$ rather than estimating a naturally occurring conflict graph. Each agent performs stateful stochastic two-route local search with a deterministic per-instance, per-repeat seed and exactly 500,000 candidate-move evaluations per schedule.

\paragraph{Matched schedules and measurements.}
The ordinary schedule uses the fixed color classes $\{0,2\}$, $\{1,3\}$, and $\{4\}$, assigning the full 500,000 evaluations to an agent on its one visit. The fractional schedule uses the five independent pairs $\{0,2\}$, $\{1,3\}$, $\{2,4\}$, $\{3,0\}$, and $\{4,1\}$, assigning 250,000 evaluations per visit; every agent appears twice. Both therefore run the same per-agent search budget. The two compatible agents in a wave execute concurrently; waves execute sequentially. We measure elapsed wall time for the entire schedule, summed process CPU time, final route cost, and feasibility. Three paired repeats per input use the same initial incumbent and corresponding seeds, with schedule order reversed in the middle repeat. The nominal critical-path length is $2.5$ rather than $3$ full-budget waves, a $16.7\%$ difference before overhead.

\begin{table}[h]
\caption{Wall-time reduction under fractional route-search scheduling. Each pair gives ordinary $\rightarrow$ fractional completion time in seconds. Reductions are medians over three paired runs. Route-cost differences are fractional minus ordinary; positive values indicate higher route cost.}
\label{tab:routing-schedule-full}
\centering
\small
\begin{tabular}{lcccc}
\toprule
Input & Repeat 1 & Repeat 2 & Repeat 3 & Median reduction \\
\midrule
Loggi-n401-k23 & $7.472\to6.370$ & $7.462\to6.366$ & $7.542\to6.322$ & $14.74\%$ \\
Loggi-n501-k24 & $8.605\to7.249$ & $8.395\to7.289$ & $8.493\to7.196$ & $15.27\%$ \\
ORTEC-n242-k12 & $7.909\to6.543$ & $7.670\to6.633$ & $7.555\to6.561$ & $13.52\%$ \\
ORTEC-n323-k21 & $6.888\to5.894$ & $6.944\to5.850$ & $7.058\to5.916$ & $15.76\%$ \\
\midrule
\multicolumn{5}{l}{Route-cost differences by repeat:}\\
\multicolumn{5}{l}{Loggi-n401-k23: $+56,0,0$; Loggi-n501-k24: $-96,-210,-1$.}\\
\multicolumn{5}{l}{ORTEC-n242-k12: $+50,-118,+6$; ORTEC-n323-k21: $0,0,0$.}\\
\bottomrule
\end{tabular}
\end{table}

\paragraph{Results.}
Table~\ref{tab:routing-schedule-full} shows faster fractional execution in all twelve paired runs. Paired wall-time reductions range from 13.16\% to 17.28\%, with a pooled median of 15.00\%. The fractional/integral ratio of summed agent CPU time ranges from 0.991 to 1.027, with median 1.008, consistent with equal search budgets. Final route cost improves in four pairs, ties in five, and worsens in three.

\FloatBarrier
\section{Allocation and verified integration on controlled tasks}\label{app:synthetic}
Binary optimization problems with specified losses and success probabilities let us examine allocation and convergence under fully known conditions. We use them to compare deadlines, check the rate calculation, and test merge behavior exhaustively. We then revisit the three-agent repair pilot that supplied control task C0 in Section~\ref{sec:experiments}. All time comparisons in this section use the stated cost model.

\paragraph{Allocation and deadline.}
Use $F(x)=\sum_{b=1}^6(1-x_b)$ on six binary blocks and three agents. A successful call repairs its assigned block exactly; otherwise it leaves it unchanged. Let $R=(r_{ib})$ contain rates governing the probability that agent $i$ succeeds on block $b$ before deadline $h$. Define synthetic success probabilities by
\begin{equation}\label{eq:synthetic}
\begin{gathered}
\eta_{ib}(h)\eqdef1-e^{-r_{ib}h},\quad d_{\mathrm{col}}(h)\eqdef h+.4,\\
R\eqdef
\begin{pmatrix}
.90&.90&.03&.03&.03&.03\\
.03&.03&.60&.60&.03&.03\\
.03&.03&.03&.03&.25&.25
\end{pmatrix}.
\end{gathered}
\end{equation}
Fresh success events are independent across calls in this illustration, although Theorem~\ref{thm:rate} does not need that assumption. Uniform allocation assigns each agent to a distinct uniformly sampled block, so $q_{ib}=\nicefrac16$. Optimized allocation solves \eqref{eq:lp} without a binding token constraint. Both policies use the same three agents and the same deadline; in these solutions every agent is assigned in every round. Thus their per-round cost is equal under equal per-call billing at that deadline.

For a fixed allocation, an unrepaired block $b$ survives a round with probability $1-a_b$. Its expected loss satisfies the exact recursion $\Exp{1-x_b^{k+1}}=(1-a_b)\Exp{1-x_b^k}$. Starting from zero and iterating gives
\begin{equation}\label{eq:exact}
\begin{gathered}
\tfrac{\Exp{F(x^K)}}6=\tfrac16\sum_{b=1}^6(1-a_b)^K,\\
\max_{b\in[6]}(1-a_b)^K\overset{\eqref{eq:coverage}}{=}(1-a)^K.
\end{gathered}
\end{equation}
The second quantity is the largest expected remaining loss when initially only one block is unrepaired. It measures exactly the worst-block criterion optimized by the linear program. For that criterion, the number of rounds needed to reach $0.01$ is $\lceil\log(0.01)/\log(1-a)\rceil$.

\begin{table}[h]
\caption{Effect of the call deadline $h$ on uniform and optimized (Opt.) allocation in the six-block example. Coverage $a$ is the smallest per-round repair probability across blocks. Modeled time is the time needed to reduce the worst-block expected remaining loss to $0.01$, using round duration $h+0.4$.}
\label{tab:deadlines}
\centering
\begin{tabular}{rrrrr}
\toprule
&\multicolumn{2}{c}{Coverage $a$}&\multicolumn{2}{c}{Modeled time}\\
$h$ & Uniform & Opt. & Uniform & Opt.\\
\midrule
0.25 & 0.0126 & 0.0340 & 236.6 & 87.1 \\
0.5 & 0.0245 & 0.0662 & 167.4 & 61.2 \\
1 & 0.0467 & 0.1254 & 135.8 & 49.0 \\
2 & 0.0850 & 0.2211 & 124.8 & 45.6 \\
4 & 0.1430 & 0.3459 & 132.0 & 48.4 \\
8 & 0.2152 & 0.4520 & 168.0 & 67.2 \\
\bottomrule
\end{tabular}
\end{table}
Table~\ref{tab:deadlines} compares uniform and optimized allocation at each deadline using color-only costs. On this separable task, the union and color candidate coincide, so deduplicating them gives the same cost for optimistic integration. Each deadline is evaluated separately, without a shared total token budget across deadlines. Longer calls are more likely to succeed but can increase completion time, while very short calls repeatedly incur the fixed overhead. Both allocation policies attain their shortest modeled completion time at $h=2$ on this grid.

At $h=2$, optimized allocation equalizes all six coverage values at approximately $0.2211$, whereas uniform allocation gives $0.1585$, $0.1359$, and $0.0850$ to the three pairs of blocks. The corresponding worst-block horizons are 19 and 52 rounds, or 45.6 and 124.8 modeled time units. 

\paragraph{Checking the rate and merge identities.}
We compare the exact mean-loss formula with 20,000 seeded independent runs at $h=2$ over 20 rounds. The simulated normalized mean loss is $0.006367$, with standard error $0.000232$, compared with the exact value $0.006752$. We also verify the matching decomposition used to realize the allocation.

For a six-block path with objective $\sum_b(1-x_b)+0.2\sum_{b=1}^5(x_b-x_{b+1})^2$, odd and even blocks form two independent color classes. Enumerating all 64 states and all 64 subsets of proposed edits verifies the merge identity. Enumerating the matching and success outcomes at each state also verifies the expected-drift bound. Across the 4,096 state/subset pairs, the union gives a larger gain than the best color candidate in 1,369 cases and ties in 2,727. Optimistic integration satisfies the merge bound throughout.

\paragraph{Interference on a complete graph.}\label{sec:examples}
A concrete instance of Theorem~\ref{thm:tight} illustrates adverse interactions. For $z\in\{0,1\}^m$, start at $z=0$ and let each proposal set one coordinate to one under the objective
\begin{equation}\label{eq:clique}
F(z)\eqdef m-\sum_{b=1}^m z_b+2\sum_{b<c}z_bz_c.
\end{equation}
Each proposal gains one individually. Applying $r$ proposals gives
\begin{equation}\label{eq:clique-gain}
F(0)-F(z)=r-r(r-1)=r(2-r)\leq1,
\end{equation}
where $r\in\{0,\ldots,m\}$. Thus the best subset gains one out of a total singleton gain of $m$, attaining the factor $1/m$ for the complete graph.
 For $m=2,\ldots,8$, applying all edits gives gain $m(2-m)$, while selecting one color class gives gain one. Optimistic integration therefore selects the color candidate, rejecting the worsening union for $m\geq3$.

\paragraph{The three-agent repair pilot.}
Three AI agents independently repaired a parser, discounted-total calculator, and receipt formatter from one common faulty snapshot. Before the original calls we fixed 24 checks: 16 single-function checks and eight adjacent-stage checks. The block-dependency graph is the three-block path, with the two endpoint functions sharing a color. Agents saw the task specification and initial code, not the central checks or other proposals. We replayed the original frozen proposals and made three new calls to the same agents, which retained their earlier task context. The second batch therefore checks repeat behavior with that context available. Both batches gave the same results in Table~\ref{tab:pilot}.

\begin{table}[h]
\caption{Repair progress and verification cost in the three-agent pipeline pilot. Arrows show the number of failing checks from the initial artifact through successive merges; ``Evaluations'' counts charged objective evaluations. Within each batch, all policies reuse the same three frozen proposals. The original and repeat batches give the same counts.}
\label{tab:pilot}
\centering
\begin{tabular}{lcc}
\toprule
Merge policy & Failed checks & Evaluations\\
\midrule
Color only & $23\to12\to0$ & 5\\
Optimistic & $23\to0$ & 5\\
Merge all & $23\to0$ & 2\\
\bottomrule
\end{tabular}
\end{table}
The initial singleton gains were $7,4,4$. The best color class gained 11, while the union gained 23, giving an interaction surplus of eight. After the first color-class merge, re-evaluating the stored middle-block proposal raised its gain to 12 and completed the repair without another agent call. Optimistic integration completed repair in one merge, but both verified policies used five objective evaluations. Merge-all used two.

These counts include the initial objective and evaluations of changed singleton candidates; optimistic integration also evaluates the union. Color-candidate loss follows from additivity, and evaluations used only to check that identity are excluded. Thus the pilot illustrates how favorable interactions can save a merge step without reducing verification cost.

For the original proposals, enumerating the eight programs formed by choosing the original or repaired version of each block gives a zero optimum and block-error-bound constant $\mu=15/23$ on that finite domain. Both batches used the same parent model with different module assignments, and the repeat calls retained the agents' earlier context. The benchmark in Section~\ref{sec:experiments} reuses this task with six additional hidden checks and starts every call in a fresh context.

\FloatBarrier
\section{Proofs}\label{app:proofs}
\subsection{Proof of Theorem~\ref{thm:rate}}
Theorem~\ref{thm:rate} combines three inequalities: the merge retains local gains, conditional agent-quality bounds relate those gains to exact block improvement, and the block error bound relates that improvement to global error. 
\begin{proof}
\textbf{Expected local progress.}
Write $\mathbb E_k[\cdot]\eqdef\mathbb E[\cdot\mid\cH_k]$ and $\mathbb E_{k,\cM}[\cdot]\eqdef\mathbb E[\cdot\mid\cH_k,\cM]$. Condition on the history before drawing the matching. Apply Proposition~\ref{prop:optimistic} (or Theorem~\ref{thm:fractional-merge} for the safe-only method), then condition on each matching and use Assumption~\ref{ass:oracle}:
\begin{equation}\label{eq:drift}
\begin{aligned}
\mathbb E_k[F(x^k)-F(x^{k+1})]
&\overset{\eqref{eq:optimistic-bound}}{\geq}
\frac1\Lambda\sum_{\cM}p_{\cM}\sum_{(i,b)\in\cM}\mathbb E_{k,\cM}[g_{ib}^k]\\
&\overset{\eqref{eq:oracle}}{\geq}
\frac1\Lambda\sum_{\cM}p_{\cM}\sum_{(i,b)\in\cM}\eta_{ib}(h)D_b(x^k)\\
&\overset{\eqref{eq:coverage}}{=}
\frac1\Lambda\sum_b a_bD_b(x^k)
\geq\frac a\Lambda\sum_bD_b(x^k)
\overset{\eqref{eq:error},\eqref{eq:coverage}}{\geq}\theta_\lambda\Delta(x^k).
\end{aligned}
\end{equation}
The penultimate inequality uses $a_b\geq a$ from~\eqref{eq:coverage} and $D_b(x^k)\geq0$, which follows because leaving the block unchanged is feasible in \eqref{eq:task}.

\textbf{From progress to error.}
Subtract the improvement from the old objective gap:
\begin{equation}\label{eq:recursion}
\begin{aligned}
\mathbb E_k[\Delta(x^{k+1})]
&\overset{\eqref{eq:task}}{=}\Delta(x^k)-\mathbb E_k[F(x^k)-F(x^{k+1})]\\
&\overset{\eqref{eq:drift}}{\leq}(1-\theta_\lambda)\Delta(x^k).
\end{aligned}
\end{equation}
Write $e_k\eqdef\Exp{\Delta(x^k)}$. Taking expectations in the conditional bound gives
\begin{equation}\label{eq:mean-recursion}
e_{k+1}\overset{\eqref{eq:recursion}}{\leq}(1-\theta_\lambda)e_k.
\end{equation}
The base case is $e_0=\Delta(x^0)$. If $e_k\leq(1-\theta_\lambda)^ke_0$, then
\begin{equation}\label{eq:induction}
e_{k+1}\overset{\eqref{eq:mean-recursion}}{\leq}(1-\theta_\lambda)e_k
\leq(1-\theta_\lambda)^{k+1}e_0,
\end{equation}
where the second inequality is the induction hypothesis and uses $1-\theta_\lambda\geq0$. This proves \eqref{eq:rate}. If $\theta_\lambda=0$, the conclusion gives monotonicity only. If $\theta_\lambda=1$, nonnegativity forces a zero gap after one round; the case $K=0$ means the unchanged initial gap.
\end{proof}
\FloatBarrier

\section{Extended related work}\label{app:related}

\paragraph{Graph-theoretic context.}
Broader work on graph coloring and related extremal questions includes \citep{hilton19735,demidovich2023reconstruction,demidovich2023tight,demidovich2023cycle,sagdeev2019frankl}.

\paragraph{Agent systems.}
Further work on agent interaction and orchestration includes \citep{wu2024,chen2024,du2024,dang2025}; empirical studies of collaboration include \citep{cemri2025,kim2026,park2026}.
\citet{ilin2026localworkglobalprogress} analyze convergence of corrected local search with verified merging.

\paragraph{Distributed optimization and training.}
Work on time, allocation, and communication includes \citep{tyurin2023,maranjyan2025ata,maranjyan2026,fradin2025}.
Related work on distributed model training includes \citep{goyal2018accuratelargeminibatchsgd,konecny2017federatedlearningstrategiesimproving,NEURIPS2024_e0b6f389,li2024convergencefedproxextrapolationinexact,mishchenko2022proxskip,panferov2024correlated,demidovich2025mast,demidovich2025methods,sadiev2022federatedoptimizationalgorithmsrandom,sadiev2025improved}; other optimization and model-adaptation methods include \citep{khirirat2025better,sokolov2025bernoulli,ilin2025thanos}.

\end{document}